\newcommand{\Real}{\ensuremath{\mathbb{R}}}
\newcommand{\Ratio}{\ensuremath{\mathbb{Q}}}
\def\@begintheorem#1#2{\trivlist
	\item[\hskip \labelsep{\bfseries #1\ #2}]\normalfont}
\def\@opargbegintheorem#1#2#3{\trivlist
	\item[\hskip \labelsep{\bfseries #1\ #2\ (#3)}]\normalfont}
\newcommand{\qedsymbol}{$\square$}
\newenvironment{proof}[1][Proof]
{\par\noindent\textbf{#1.}\space}
{\hfill\qedsymbol\par}
\newtheorem{definition}{Definition}
\newtheorem{theorem}{Theorem}
\newtheorem{corollary}{Corollary}
\newtheorem{lemma}{Lemma}
\title{How Pinball Wizards Simulate a Turing Machine}
\author{Rosemary Adejoh \\
  Faculty of Media \\
  Bauhaus University\\
  Weimar, Germany\\
  \texttt{rosemary.utenwojo.adejoh@uni-weimar.de} 
   \And
  Andreas Jakoby \\
  Faculty of Media \\
  Bauhaus University\\
  Weimar, Germany\\
  \texttt{andreas.jakoby@uni-weimar.de} \\
   \And
  Sneha Mohanty\\
  Computer Networks and Telematics\\
  University of Freiburg\\
  Freiburg, Germany\\
  \texttt{mohanty@informatik.uni-freiburg.de} 
   \And
  Christian Schindelhauer\\
 Computer Networks and Telematics\\
 University of Freiburg\\
 Freiburg, Germany\\
  \texttt{schindel@informatik.uni-freiburg.de} \\
  %% \AND
  %% Coauthor \\
  %% Affiliation \\
  %% Address \\
  %% \texttt{email} \\
  %% \And
  %% Coauthor \\
  %% Affiliation \\
  %% Address \\
  %% \texttt{email} \\
  %% \And
  %% Coauthor \\
  %% Affiliation \\
  %% Address \\
  %% \texttt{email} \\
}
\begin{document}
\maketitle

\begin{abstract}
We introduce and investigate the computational complexity of a novel physical problem known as the \emph{Pinball Wizard} problem. It involves an idealized pinball moving through a maze composed of one-way gates (outswing doors), plane walls, parabolic walls, moving plane walls, and bumpers that cause acceleration or deceleration. Given the initial position and velocity of the pinball, the task is to decide whether it will hit a specified target point.

By simulating a two-stack pushdown automaton, we show that the problem is Turing-complete---even in two-dimensional space. In our construction, each step of the automaton corresponds to a constant number of reflections. Thus, deciding the \emph{Pinball Wizard} problem is at least as hard as the Halting problem. Furthermore, our construction allows bumpers to be replaced with moving walls. In this case, even a ball moving at constant speed---a so-called ray particle---can be used, demonstrating that the \emph{Ray Particle Tracing} problem is also Turing-complete.
%	
%We investigate the computational complexity of a physical problem, the Pinball Wizard problem.
%It consists of an idealized pinball that passes through a maze  made up of one-way gates (outswing doors),  plane walls, parabolic walls, moving plane walls and bumpers causing acceleration and deceleration. In the beginning, the position and velocity of the pinball is given and it is to decide whether the pinball hits a given target point. 
%
%By simulating a 2-stack pushdown automaton we  show that this problem is Turing complete even in two dimensional space. 
%In our construction, each step of the automaton corresponds to a constant number of reflections. So, deciding the Pinball Wizard problem is at least as hard as the Halting problem. Our constructions allows replacing bumpers with moving walls. Then, even with a constant speed ball, called a ray particle may be used, which shows that the Ray Particle Tracing problem is Turing-complete as well.
%
%As a second problem we consider the Ray Particle Tracing problem. It is similar to Ray Tracing, but here we also take  the length of the ray into consideration. In this problem, we have plane full mirrors, merging one-way mirrors, parabolic mirrors and moving mirrors. The question is whether a ray particle emitted at speed of light in a given direction illuminates a given target point.
%
%\keywords{ Pinball problem  \and Ray Tracing \and Halting problem}
%\keywords{ Pinball Wizard problem  \and Ray Particle Tracing problem \and Halting problem}
%
\end{abstract}

% keywords can be removed
%\keywords{First keyword \and Second keyword \and More}
\keywords{Pinball Wizard problem, Halting problem, Turing-complete}
% !TEX root =  pinball_main.tex

\section{Introduction and Motivation}

\label{intro}
A question of continuous interest is which physical or analog models are Turing complete. Christopher Moore \cite{moore1990unpredictability} outlined a construction showing that, even with precisely known initial conditions, friction less movement, perfect reflection and arbitrary precision, a single billiard ball moving at constant speed can simulate a two-counter machine in 3D. Similarly, Reif et al.~\cite{Reif1994ComputabilityAC} demonstrated in their seminal work that 3D ray tracing can simulate a reversible two-counter automaton, thereby establishing its Turing-completeness. Both models rely on the use of parabolic mirrors in 3D space, supporting Moore's conjecture that three dimensions suffice to achieve Turing universality.

In this paper, we show that full 3D Euclidean space is not required: by allowing the speed of the ball to vary, we introduce a novel two-dimensional model—referred to as the 2D Pinball Wizard Problem, that can simulate a Turing complete two-stack pushdown automaton (PDA)~\cite{hopcroft1979introduction}.

Our {\em Pinball Wizard problem} is inspired by classical pinball machines and features components such as reflecting planes and parabolic walls, one-way gates (outswing doors), bumpers for accelerating and decelerating the ball, as well as a moving plane wall. Unlike in real pinball machines we do not have a tilted plane and the ball, described by a single point, moves in a straight line with constant velocity when it does not bounce into one of the components. These components may change the direction of the ball's trajectory and the velocity of the pinball, when the pinball bounces on these components in its path, see Section~\ref{tracing} for a complete definition of this problem.

The main components of the original pinball game that we retain here, are the one-way gates as well as the target. A one-way gate is an area that is blocked off after the ball passes through it once. It does not change the movement of a ball coming from one side, but will block and reflect the ball if it comes from the other side. We also use bumper walls in our work, which when hit by the pinball, pushes the ball away with a positive or negative change of speed. While there are various types of target in the pinball game, for simplicity, we are working with stationary target position.

The 2D Pinball Wizard model describes a decision problem, where given an initial configuration of the pinball, it is to be decided, whether it hits a given target. We show that this model facilitates the simulation of any two-stack automaton with push and pop operations.

If all stationary bumpers are replaced with a combination of moving walls, the result still holds. We consider the spatial offset interval where the ball enters the construction, and the modified speed determines the time delay (offset) at the end line. These components serve our design of a pinball machine where we want to test if the ball exits at a certain target position given a start time and space offset at the beginning of the path of a pinball having a known speed and direction. Finally, we exchange the ball with a ray particle with constant speed, a modified construction holds up as well, showing that the {\em Ray Particle problem} with moving walls and one-way mirrors is Turing-complete as well.

In Section \ref{rwork}, we discuss Related Work. Thereafter, we provide a more formal definition of the \emph{Pinball Wizard} problem in Section~\ref{tracing}. That Section concludes with a brief mention of the \emph{Ray Particle Tracing} problem, and its relationship with our \emph{Pinball Wizard} problem.  We then demonstrate the Turing-completeness of the \emph{Pinball Wizard} problem by simulating a two-stack pushdown automaton in Section \ref{lowerbound_pinball}, followed by Conclusions and Open Problems in Section \ref{conclusions}.

% !TEX root = pinball_main.tex

\section{Related Work}
\label{rwork}

The computational complexity of puzzles and games has long fascinated mathematicians and computer scientists. In his survey \cite{Ryuhei}, Uehara outlines the history of this line of research and shows current trends in theoretical computer science. The simple rules found in puzzles and games not only resemble basic computational operations, but can also serve as alternative models of computation. Studying them may lead to deeper insights into the nature of computation itself.

A remarkable result in this context is the Game of Life, invented by John Conway in 1970 \cite{conway1970game} and popularized by Martin Gardner \cite{gardner1970fantastic}. Confirming a long-standing conjecture, Berlekamp et al. \cite{berlekamp2004winning} showed in 2004 that it is Turing-complete—that is, any computation performed by a Turing machine can be simulated within Conway’s Game of Life. This allows the game to be regarded as a computational model alongside recursive function theory and the Turing machine model.
%
%An early remarkable result in this context was  the game of life, invented by John Conway in 1970 \cite{conway1970game} and popularized by Martin Gardner \cite{gardner1970fantastic}. Proving a longstanding conjecture, Berlekamp et al. \cite{berlekamp2004winning} showed in 2004 it is Turing complete, which means that any computation by a Turing machine can be simulated by the Conway’s game of life. 
%That is, we can consider this game as a computation model adding to recursive function theory and Turing machine model. 

%
%Since then, the relationship between computational complexity classes and complexity of puzzles and games has
%been investigated.

Sometimes, games are invented to gain a better understanding of aspects of computation. In 1974, Cook and Reckhow \cite{cook1974storage} introduced the pebble game to study storage requirements. This game has proven to be a powerful tool for exploring computational complexity classes such as NL, P, NP, PSPACE, EXP, and others. It also plays an important role in understanding register allocation, parallel computation, and proof complexity.

There is not enough room to cover all the research on the computational complexity of traditional games and puzzles published since the advent of NP-completeness and higher complexity classes such as PSPACE. For example, Demaine et al.\ in \cite{robotpaper} prove that push-pull block puzzles in 3D are PSPACE-complete to solve, and that push-pull block puzzles in 2D with thin walls are NP-hard, thereby settling an open question posed in \cite{zubaran2011agent}. Push-pull block puzzles are a type of recreational motion planning problem, similar to \emph{Sokoban}, involving the movement of a 'robot' on a square grid with $1 \times 1$ obstacles. While the obstacles cannot be traversed by the robot, some of them can be pushed or pulled into adjacent squares.
Recent progress in the field of combinatorial reconfiguration, particularly through the use of the constraint logic model, has revealed many natural PSPACE-complete problems. These include a variety of sliding block puzzles whose computational complexity had remained open for nearly 40 years.

For example, in \cite{10.1007/978-3-642-13122-6_22}, Forišek analyzes the computational complexity of various two-dimensional platform games. He states and proves several meta-theorems that identify classes of games for which the set of solvable levels is NP-hard, and others for which the set is even PSPACE-hard. Notably, Commander Keen is shown to be NP-hard, while Prince of Persia is shown to be PSPACE-complete. He also investigates the related game Lemmings, constructing instances that admit only exponentially long solutions.

Pinball Wizard is a very popular game, often found in pubs as a contrived mechanical device or as a video/computer game. In the realm of video games, speedrunning is a popular activity where the goal is to complete a game as quickly as possible. Well-known titles such as Super Mario Bros., Castlevania, and Mega Man are played by enthusiasts worldwide, with countless hours spent daily on livestreams as players refine their skills in pursuit of world records. However, human execution is not the only factor in a successful speedrun. Common techniques such as damage boosting and routing require careful planning to maximize time gains. In \cite{lafond:LIPIcs.FUN.2018.27}, Lafond shows that optimizing these mechanics constitutes a significant algorithmic challenge, leading to novel generalizations of classic NP-hard problems such as the knapsack and feedback arc set problems.

The fascination of Pinball Wizard may lie in its physical nature: a ball moving on a plane, representing an analog computational model not captured by traditional digital models discussed so far. Analog models offer many intriguing facets.
In \cite{bonifaci2012physarum}, Bonifaci et al. provide a rigorous proof that a biologically inspired mathematical model of slime mold behavior converges to the shortest path in any network, regardless of topology or initial mass distribution—formally validating classical maze experiments. They generalize the original model by Nakagaki et al. \cite{nakagaki2001smart} to show convergence not only for shortest paths but also for undirected linear programs with non-negative cost vectors.

Siegelmann et al. \cite{siegelmann1991turing} show that a finite neural network composed of sigmoidal neurons can simulate a universal Turing machine, further illustrating the computational power of analog systems.

Analog models can also display surprising limitations. Pour-El and Richards \cite{pour1979computable} construct a computable initial value problem—based on well-behaved differential equations with computable coefficients and initial conditions—whose solution is not computable. This striking result demonstrates that physical systems governed by differential equations may yield unpredictable outcomes, even with perfect knowledge of the inputs.

The reversible nature of physical computation is explored in Fredkin and Toffoli’s Billiard Ball Model (BBM) \cite{fredkin1982conservative}, in which multiple balls and elastic collisions—akin to carom billiards—are used to simulate logic gates. Margolus \cite{margolus1984physics} extends this model to show that conservative, reversible systems can simulate universal computation, making outcome prediction undecidable in the general case. For further constructions, see Durand-Lose’s textbook~\cite{durand2002computing}. As in our work, precise timing is crucial for correctness. Zhang et al.~\cite{zhang2009computing} optimize the mirror-based collision computing model by introducing $m$-counting gates, significantly reducing the number of mirrors required. A comprehensive overview of the field of collision-based computation can be found in \cite{adamatzky2012collision}. In \cite{xorxnor}, Chattopadhyay and Gayen have shown that 3-bit XOR and XNOR logic circuits can be constructed using mechanical elements, such as; beam combiners, fixed as well as movable mirrors.

As mentioned earlier, in 1990 Moore \cite{moore1990unpredictability} envisioned a simulation in which a single billiard ball moving at constant speed in 3D can simulate a two-counter machine. A more detailed proof can be found in the work of Reif et al. \cite{Reif1994ComputabilityAC}, showing that 3D ray tracing (the same problem with a different name) is Turing-complete. Their seminal work analyzes six 3D scenarios and was the first to investigate the computational complexity of ray tracing, establishing it as equivalent to the Halting problem. 
This foundational result has sparked significant theoretical discussions, such as those in \cite{ziegler2009physically}. The intricate nature of Ray Tracing has recently inspired a proposal in \cite{anonymous2025okinawa} to employ optical gates using reflection and refraction as foundational elements in a symmetric encryption algorithm. 

We are not the first to consider the theoretical aspects of bumpers in a pinball machine. Pring et al. \cite{pring2011dynamics} study the dynamics of an impact oscillator with a modified reset law, inspired by a system with "active impact"—namely, the pinball machine. They show how the subtle interplay between two underlying maps gives rise to a large number of new periodic orbits, potentially explaining some of the complexity observed in the dynamics of real pinball machines.

Moore \cite{moore1990unpredictability} and Reif et al. \cite{Reif1994ComputabilityAC} demonstrated that three spatial dimensions are sufficient for Turing-completeness, leaving open the question of whether two dimensions would suffice. While our construction still relies on speed or time as a third dimension, we show that the spatial aspect of the system can be confined to two dimensions while preserving Turing-completeness.

% !TEX root = pinball_main.tex

%\section{Tracing Problems}

%We introduce the Pinball Wizard problem. The pinball has variable speed in our model. 

\section{The Pinball Wizard problem}\label{tracing}

We introduce the Pinball Wizard problem as follows: The pinball is a point moving in two dimension Euclidean space with variable speed and direction.  It moves through a pinball machine and interacts with the following components elaborated below under \textit{Components}. 

The ball enters the construction at timestep $t=0$ with a given speed and direction. The task is to determine whether this pinball reaches a given target. We can analyze the movement of the ball within the construction, which consists of stationary and moving walls, one-way gates, parabola walls and bumper walls. With these components we later on construct gadgets implementing two independent stacks using space offset and time delay (offset) while changing the speed on the pinball, when it interacts with our gadgets. Some gadgets, are dedicated to inflict controlled time delays produced by pairwise bumpers or pairwise moving walls. 
%  thus we give a clear description of the collision behaviours and the impact on the ball's speed. 
%Note that the one-way gate swings open if the ball approaches from the right direction therefore, it has no impact on the speed of the pinball. 
\begin{figure}[htb]
	\begin{center}
		\scalebox{.27}{
			\input{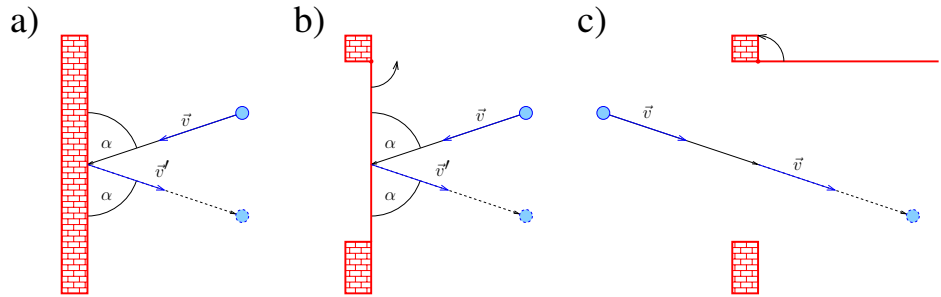}
		}
	\end{center}
	\caption{
		a) A perfectly elastic collision of a ball of speed $\vec{v}$ with a wall, after the collision the direction has changed but the speed is preserved. b) A ball has a collision with a one-way gate on the blocking side. This leads to a perfectly elastic collision. c) A ball has a collision with a one-way gate on the pass through side.
		\label{fig:pinball_tools_1}}
\end{figure}
\FloatBarrier

In the Pinball Wizard problem we always assume perfectly elastic collisions of the ball with walls, where the ball might change speed and direction without any impact on the wall.
%
%Under our assumption above we get the following versions of a perfectly elastic collision:
\paragraph*{Versions of perfectly elastic collisions possible for the Pinball Wizard problem}
\begin{enumerate}
	\item The wall is stationary. Here the ball changes its path after hitting the wall according to the law of reflection
	(see Figure~\ref{fig:pinball_tools_1}.a).
	% which states that at the point of contact with the wall, the angle of incidence equals the angle of reflection for the ball.
	\item The wall is moving with a constant speed. Then we can investigate the system relative to the moving wall which changes the speed and travel direction of the ball. We determine the new speed and direction of the ball relative to the wall's inertia, see Figure~\ref{fig:pinball_tools_2}.
	\item The wall is accelerated, i.e. the speed changes over time. Then we analyze the speed of the wall at the time of collision, and determine the new direction of the ball and its speed based on the speed and the direction of the movement of the wall at the time of collision. To determine the new direction and speed of the ball, we apply case 2 above.
	%in this case our second case with the wall speed and position at the collision time.
\end{enumerate}

\begin{figure}[htb]
	\begin{center}
		\scalebox{.26}{
			\input{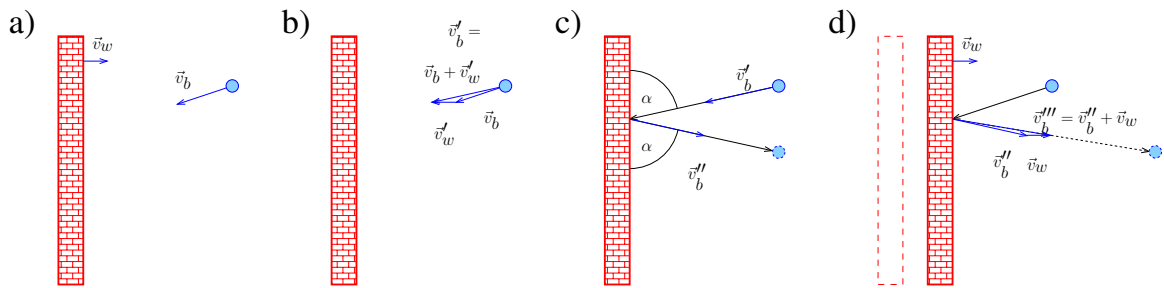}
		}
	\end{center}
	\caption{
		a) A perfectly elastic collision of a ball of speed $\vec{v}_b$ with a wall
		of speed $\vec{v}_w$ b) relative to the wall the ball has a speed of 
		$\vec{v}_b'=\vec{v}_b+\vec{v}_w'$ c) the collision and the 
		speed of the ball $\vec{v}_b''$ relative to the wall after the collision and 
		d) the balls direction and speed $\vec{v}_b'''=\vec{v}_b''+\vec{v}_w$
		if it is not analyzed relative to the moving wall.
		\label{fig:pinball_tools_2}}
\end{figure}
\FloatBarrier

We also use bumper walls, which have a similar effect as straight moving walls. Bumpers increase or decrease the magnitude of the ball’s velocity vector while reversing its direction, reflecting the ball to the opposite direction relative to its path before the collision. This behavior is illustrated in Figure~\ref{fig:pinball_tools_3}.

% Erased because it is wrong
%%While we provide a general description here, in our constructions we only consider collisions with moving walls whose motion at the time of collision is perpendicular to the wall’s orientation and directed either toward or away from the ball’s trajectory.
%%%
%%%While we give a general description here, 
%%%within our constructions we only use collision with moving walls where the movement (at the time of collision) is perpendicular to the wall's orientation 
%%%% has a $90^\circ$ angle relative to the walls orientation and where the balls 
%%% and the walls are moving towards the ball's direction or in the oppposite direction. 

Another component is the one-way gate. From one side, the ball passes through without any change in its motion; from the other side, it is reflected as if hitting a straight wall—behaving like a trapdoor. This is  %disallowing the ball to exit the gate from the opposite direction as 
illustrated in Figure~\ref{fig:pinball_tools_1} b) and c). The last type of walls are parabola walls. These static walls are defined by a quadratic equation and the interval of start and end coordinates. The  ball interacts with them by reflection like with a static straight wall, except the curvature has to be taken into account. We consider the tangent of the parabola at the point, where the balls hits this curve. Then, the angle of incidence of the ball is equal to its angle of reflection at this point with respect to the tangent.
Figure~\ref{fig:pinball_tools_4} describes the different types of walls in the problem setup.

\begin{figure}[htb]
	\parbox[b]{0.49\linewidth}{
		\centerline{\scalebox{.25}{
				\input{fig/pinball_tools_3_01.pspdftex}
		}}
		\caption{
			a) A bumper with positive (speed increasing) effect and b) a bumper with negative (speed decreasing) effect. 
			\label{fig:pinball_tools_3}}
	}
	\hfill
	\parbox[b]{0.49\linewidth}{
		\centerline{
			\scalebox{.25}{
				\input{fig/pinball_tools_4_02.pspdftex}
		}}
		\caption{
			a) wall b) moving wall c) one-way gate d) bumper with positive effect e) bumper with negative effect
			\label{fig:pinball_tools_4}}
	}
\end{figure}
\FloatBarrier

\paragraph*{Components}\label{gadget_list}
We now give a formal description of the components of our problem.
%We use the following formal description of these gadgets as the input of the decision problem. %which are the input of the decisions problem is 
%Formally, for our construction we use the following gadgets:
\begin{itemize}
	\item {\bf walls} are given by the start and end positions of the wall as rational coordinates in $\Ratio^2$, see Figure.~\ref{fig:pinball_tools_1}.a).
	\item {\bf parabolas} are given by a quadratic function $y=a\cdot x^2+b\cdot x+c$ where $a,b,c$ are rational values as well as start and end values $x_0$ and $x_1$, i.e. $a,b,c,x_0,x_1 \in\Ratio$.
	\item {\bf one-way gates} are given by the start and end positions of the wall given by rational coordinates. We assume that the gate opens in the clockwise direction going from the start position to the end position, see Figure.~\ref{fig:pinball_tools_1}.b) and c).
	\item {\bf moving walls} are given by start and end positions of the wall's initial location as rational coordinates in  $\Ratio^2$,
	the time when movement starts, a time interval $T_0=[0, t_1)$ with 
	a movement function for this time interval described by a rational function in time $t$ given by the polynomials of its numerator and denominator, i.e. the ratio of two polynomials of the time $t$ with rational coefficients, and finally a 
	second time interval $T_1=[t_1, t_2)$ where the wall is moving back. We assume that after step $t_2$ the movement is periodically repeated, see Figure.~\ref{fig:pinball_tools_2}).
	\item {\bf bumpers} are given by start and end positions of the bumper given as rational coordinates, 
	a time when the bumper effect starts, a time interval $T_0=[0, t_1)$ with 
	a bumper effect function for this time interval given by quotient of two polynomials with rational coefficients depending on time $t$, and finally a 
	second time interval $T_1=[t_1, t_2)$ where the bumper rests. We assume that after step $t_2$ the movement is periodically repeated
	(Figure.~\ref{fig:pinball_tools_3}).
\end{itemize}
%We assume that all rational values are represented by pairs of integer numerator and denominator. Using these gadgets our first problem can be defined.

\begin{definition}[Pinball Wizard problem]
	Given a pinball with rational start and end positions, as well as its speed vector and a given set of components 
	as described above, decide whether the ball reaches the target position.
\end{definition}

	Given that the speed is taken as a constant, for example, the speed of light, then we obtain a similar problem to the Pinball Wizard problem called, the Ray Particle Tracing problem.

\begin{definition}[Ray Particle Tracing problem]
Given a ray particle with rational start and end positions, as well as its speed vector and a given set of components as above, if we assume that walls behave the same as mirrors, decide whether the ray particle reaches the target position. The speed of the ray particle is taken as a constant, which is; the speed of light.
	
\end{definition}

The Ray Particle Tracing problem is closer to the problem analyzed by Moore in \cite{moore1990unpredictability}, as his construction for simulating a two-counter machine in 3D involves a billiard ball moving at constant speed. All proofs and observations for the Ray Particle Tracing problem can be adapted analogously to those for the Pinball Wizard problem. These adaptations are discussed in detail in the Appendix~\ref{sec:RayParticleTracing}.
%, since his proposal for simulating a two-counter machine in 3D involved a billiard ball moving at constant speed.
%All proofs and observations for the Ray Particle Tracing problem can be adapted analogous to the Pinball Wizard problem. These have been covered in detail in the Appendix.

%\section{Introduction}
%\lipsum[2]
%\lipsum[3]
% !TEX root = pinball_main.tex

\section{Pinball Wizard problem is Turing Complete}
\label{lowerbound_pinball}

%\begin{theorem}
%With Ray Particle Tracing using  merging plane one-way mirrors, parabolic mirrors and moving mirrors one can simulate a the computation of a one-tape deterministic Turing machine.
%\end{theorem}
%\begin{proof}
%....
%\end{proof}
%In this section, the simulation of a 2-stack PDA using the Pinball Wizard Problem is discussed. Note that the space offset stack from Section~\ref{lowerbound_rpt} can also be implemented for the Pinball Wizard problem. This leaves the implemention of a stack within the time offset of a ball, i.e. the time offset when a ball occurs at a specific horizontal interval relative to a given base time.

In this Section, we present the implementation of two independent stacks by using walls, bumper walls, moving walls, and one-way gate for the  
Pinball Wizard problem. For this we use the difference between the 
time and the position when and where the ball arrives at specific lines (i.e. a line called the starting interval or the end interval), more precisely the time and the space offset. In this Section we will present our construction using bumper walls. Note that the bumper walls can be replaced by moving walls in our constructions (which will be discussed in the Appendix~\ref{delay-stack-ball_app}).
% \textcolor{red}{Point out clearly, exactly where in Appendix}
 Analyzing this alternative construction we can adapt our result also to the case where the ball never changes its speed, i.e. if it has reached the speed of light. This transforms the Pinball Wizard problem into the Ray Particle Tracing problem, as elaborated in Appendix~\ref{sec:RayParticleTracing}. Hence the Ray Particle Tracing problem is also Turing Complete if we restrict ourselves to 2D plus time.

Within our constructions we have to guarantee that the two stacks are completely independent, i.e. that push and pop operations of one stack do not change the value of the second stack (as a side effect). For the space offset stack this can be guaranteed if all allowed ways of a ball have the same length, thus the time offset will not be effected by the stack operations of the space offset stack. Analogously for the time offset stack we have to guarantee that all allowed ways of the ball are parallel to each other. Thus the operations of time offset stack will not influence the value of the space offset stack. 

%simulation of a 2-stack PDA using the Pinball Wizard Problem. We give a brief description of the implementation of a stack using the space and time offsets of the ball. The space offset is the horizontal distance between the position where the ball passes the starting interval, and where it crosses the end interval after it has passed through the gadget. By ´´time offset´´, we refer to the time when the ball arrives at a specific horizontal interval relative to a given base time.

% !TEX root = pinball_main.tex

\subsection{Implementing a Stack Using the Time Delay of the Pinball}
\label{subsect:Time_Delay_of_the_Pinball}

The main idea to implement a stack by using a time or a space offset works as follows. Assume that the earliest time where the ball can cross a predefined (input) line is $t_0$ and the leftmost position on the line is $s_0$. And assume that the ball crosses the line at time $t_0+\Delta t$ at position $s_0+\Delta s$. Then we can use the offset values $\Delta t$ and $\Delta s$ to implement two stacks. Furthermore, assume a second line is used which represents the output of a gadget implementing a stack operation. Then we use a system of walls, moving walls, bumper walls and one way gates to implement

\begin{equation*}
	\Delta t' \ = \ \left\{\begin{array}[c]{lll}
		\frac{\Delta t}{2} & \text{to implement push 0} & (1)\\
		\frac{\Delta t\ +1}{2} & \text{to implement push 1} & (2)\\
		2\cdot \Delta t & \text{to implement pop, if its output is $0$} & (3)\\
		
		2\cdot \Delta t -1 & \text{to implement pop, if its output is $1$} & (4)
	\end{array}\right.
	\label{eqn:push_pop_equations}
\end{equation*}

%$$
%	\Delta t' \ = \ \left\{\begin{array}[c]{lll}
%		\frac{\Delta t}{2} & \text{to implement push 0} & (1)\\
%		\frac{\Delta t\ +1}{2} & \text{to implement push 1} & (2)\\
%		2\cdot \Delta t & \text{to implement pop, if its output is $0$} & (3)\\
%		
%		2\cdot \Delta t -1 & \text{to implement pop, if its output is $1$} & (4)
%		\end{array}\right. 
%$$

\begin{figure}[htb]
	\begin{center}
		\scalebox{.34}{
			\input{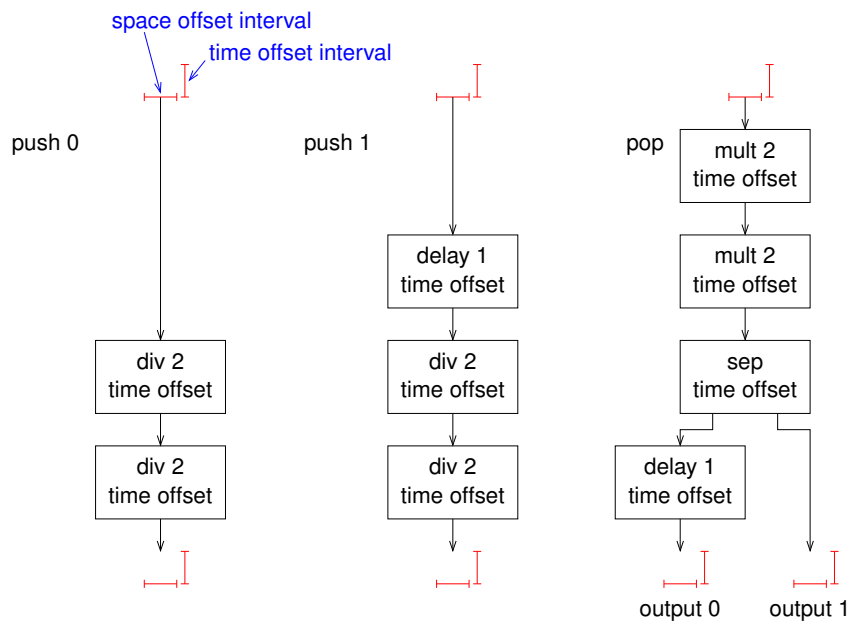}
		}
	\end{center}
	\caption{Illustration of the stack operations implemented by the time offset.
		\label{fig:stack_time_main}
	}
\end{figure}
\FloatBarrier
In the same way we can implement the stack operations using $\Delta s$ for the space offset. For our time offset stack the multiplication with $\frac12$ or with 2 is not sufficient. Since we do not assume infinite speed for the walls, a gap between the two offset intervals is therefore necessary. This gap is implemented by multiplying the time offset two times with $\frac{1}{4}$ or $4$, as required. Thus, we have to multiply with $\frac14$ or $4$ as illustrated in Figure~\ref{fig:stack_time_main}. A detailed explanation is included later in this section. We can draw a similar illustration for the space offset stack (by replacing 'time' with 'space', deleting the second multiplication box, and replacing the extra time offset with a space shift in Figure~\ref{fig:stack_time_main}).

Let us assume that the ball enters the construction with a speed of $v$, then the speed will be modified by hitting the first bumper wall. 
By using a second bumper wall we will take the speed modification back. Hence the ball leaves the construction with a speed of $v$.
The multiplication factor of the time offset will be generated by the modified speed within the time where the ball has a modified speed 
and the differed distances of the bumper walls. Recall, if the first bumper wall has a positive effect to the speed of the ball, the second has to have a 
negative effect and vice versa. We assume that the effect can by modified over time, analogously to acceleration. Therefore we use the acceleration
of a bumper to describe this behavior of a bumper.

So the bumpers are used to manipulate the 
speed of the ball, and the changed speed over a given fixed distance $\delta_2$ 
is used to achieve a multiplication of $2$ (or $\frac{1}{2}$, resp.) with the time offset.
We assume that outside of the construction blocks, the ball maintains a given speed $v$. This allows us to keep track of the time interval in which the ball arrives at a specific construction block.

The construction is illustrated in Figure~\ref{fig:tool_time_bumper_mult_1}.
We now claim that:
%{\color{red}
	%one-way gate for trapdoor ?	
	%
	%we use perfectly elastic collision
	%
	%Question: we need something for the perfectly elastic collision (literature?)
	%\begin{enumerate}
	%\item
	%collisions at a stable fixed walls
	%\item
	%collisions at walls which moves with a fixed speed in one direction (wall speed does noch change)
	%\item
	%collisions at walls which accelerated speed (the momentum of the acceleration will not change)
	%\end{enumerate}
	%For no 1 this is known as perfect elastic collision. For no 2 this is a perfect elastic collision 
	%relativ to the moving wall. My question: does the acceleration change something according to no 2 
	%or is it just no 2 where the speed is taken at the moment of collision?
	%}

\begin{lemma}
	The 2D Pinball Wizard problem with one-way gates, plane and parabolic walls, moving walls, and bumpers
	simulates a stack using the time offset of the ball.
	\label{th:Pinball_Bumper}
\end{lemma}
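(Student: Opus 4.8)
The plan is to interpret the time offset $\Delta t$, normalized to lie in $[0,1)$, as the binary expansion of the stack contents: if the stack read from top to bottom is $b_1 b_2 b_3\cdots$, we set $\Delta t = \sum_{i\ge 1} b_i 2^{-i}$. Under this encoding the four affine maps (1)--(4) are exactly the stack operations: pushing a bit $b$ sends $\Delta t \mapsto (b+\Delta t)/2$ (prepending $b$), while popping reads the leading bit $b_1 = \lfloor 2\Delta t\rfloor$ and sends $\Delta t \mapsto 2\Delta t - b_1$ (deleting it). So the first step is to record this correspondence and thereby reduce the lemma to building physical gadgets that realize four primitives: the scalings $\Delta t\mapsto \tfrac12\Delta t$ and $\Delta t\mapsto 2\Delta t$, the additive ``delay'' shifts $\Delta t\mapsto\Delta t\pm 1$, and a ``separator'' that, for a pop, routes the ball to the output-$0$ or output-$1$ line according to the leading bit, all while leaving the crossing position (the space offset $\Delta s$) unchanged.

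The core is the scaling gadget, realized by a pair of bumpers as in Figure~\ref{fig:tool_time_bumper_mult_1}. I would analyze a ball that crosses the input line at time $t_0+\Delta t$ moving at the ambient speed $v$. It travels a fixed distance $\delta_2$ to the first bumper, arriving at time $t_0+\Delta t+\delta_2/v$; using the relative-velocity collision rule of Figures~\ref{fig:pinball_tools_2}--\ref{fig:pinball_tools_3}, the bumper reverses the ball and assigns it a new speed $v'$. Since a bumper's effect may vary in time (its acceleration is a rational function of $t$), I can choose this effect so that the return leg contributes exactly the time needed to make the output crossing time affine in $\Delta t$ with the desired slope: writing the post-bounce speed as a suitable rational function of the (affine-in-$\Delta t$) impact time forces $1/v'$ to be affine in $\Delta t$, and the fixed travel distance $\delta_2$ then turns this into a clean multiplicative factor $2$ or $\tfrac12$. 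A second bumper, placed at a different distance, is used purely to cancel the residual speed and direction change so that the ball leaves parallel to its entry and again at speed $v$; this is what keeps all admissible trajectories parallel and hence guarantees that the time-stack operation does not disturb the space offset $\Delta s$. The additive delays $\pm 1$ are far simpler: a speed change maintained over a fixed distance adds the same constant to every crossing time and therefore translates $\Delta t$ without rescaling it, giving push~1's $+1$ and pop-output-1's $-1$.

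Finally I would assemble the pop operation from a multiplication followed by a separator. After multiplying by $2$ the offset lies in $[0,1)$ exactly when the popped bit is $0$ and in $[1,2)$ exactly when it is $1$; a one-way gate or a moving wall timed to the boundary between these two windows deflects the two cases onto the two output lines, and on the output-$1$ branch we additionally apply the $-1$ delay. Here lies the main obstacle, and it is the reason the construction multiplies twice by $\tfrac14$ or $4$ rather than once by $\tfrac12$ or $2$: because the walls move at finite (not infinite) speed, the separating wall needs a nonempty time gap between the two windows in order to be in position, so I must first inflate the offset to open such a gap, perform the separation, and then rescale to restore the normalization. The delicate points to verify are that the chosen rational bumper effects yield the scaling factor uniformly over the whole interval (including at its endpoints), that the inserted gap is wide enough for a finitely fast wall yet is exactly removed by the compensating rescaling, and that every gadget returns the ball to speed $v$ on a trajectory parallel to its input so that $\Delta s$ is provably untouched; composing these gadgets as in Figure~\ref{fig:stack_time_main} then yields the claimed simulation of a stack by the time offset.
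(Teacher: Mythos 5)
Your proposal is correct and follows essentially the same route as the paper: the same binary-expansion encoding of the stack in the time offset, the same two-bumper scaling gadget in which the time-varying bumper effect is chosen so that the reciprocal of the modified speed is affine in the arrival time (giving $v_{b,1}(t_s)=v\delta_2/(v t_s+\delta_2)$ and hence a clean factor of $2$ or $\tfrac12$ over the fixed distance $\delta_2$, with the second bumper restoring speed $v$), and the same doubling of the multiplication to $\tfrac14$/$4$ to open a gap for the finitely fast separating wall. The only cosmetic difference is that you realize the constant $\pm1$ delay by a sustained speed change rather than the paper's $45^\circ$-wall detour, which is an equally valid implementation.
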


\begin{proof}
	We show the proof by describing how the multiplication of the time offset by 2 is implemented.
	Let us assume the following values
	where $t_s$ is the starting time of the ball relative to a base time line.
	We can divide the way of the ball into three parts. Each of these parts have been described in Figures~\ref{fig:tool_time_bumper_mult_1} (way 1, 2 and 3).
	
	\noindent
	{\bf Part 1}
	In the first part the ball goes from the 
	starting line to the first moving wall.
	\begin{itemize}
		\item $\delta_1$ is the distance between the starting line of the ball and the 
		start line of the moving interval of the first moving wall.
		\item $v$ is the initial speed of ball and its final speed.
		\item The ball reaches first the bumper wall at time step $t_1(t_s)$.
		Note that $t_1(t_s) = t_s+\frac{\delta_1}{v}$. We use $t_1=t_1(0)$.
		\item $a_1(t)$ is the acceleration of the first bumper wall. We assume that
		the acceleration is 0 at time $t\le t_1$. The effect of the bumper wall will be $v_1(t)= (t-t_1)\cdot a_1(t)$.
		\item $v_{b,1}(t_s)$ is the speed of the ball after the collision with the first bumper wall.
		Note that $v_{b,1}(t_s) = v - v_1(t_1+t_s)$.
	\end{itemize}
	Note that 
	$$
		t_1  =  \frac{\delta_1}{v}\qquad\text{and}\qquad
		v_1(0)  =  v\ .
	$$
	
	\begin{figure}[htb]
		\begin{center}
			\scalebox{.25}{
				\input{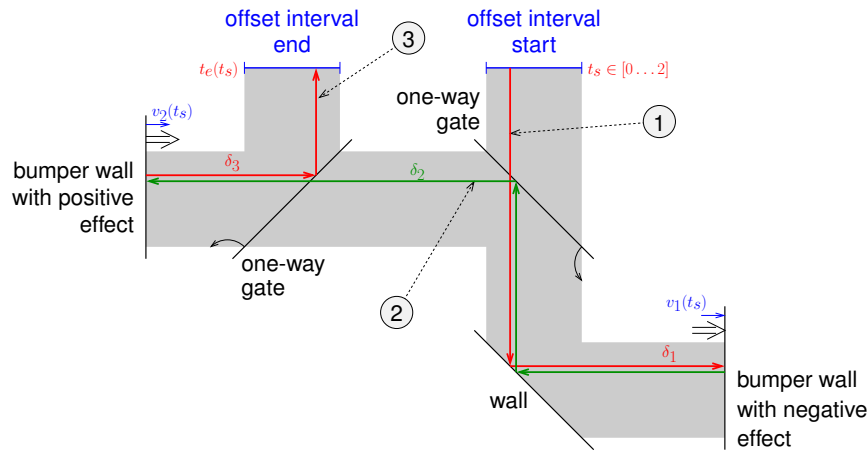}
			}
		\end{center}
		\caption{Multiplying the time offset by a factor using bumpers: The first red way (way 1) from the input (at offset interval start) to the right hand side bumper wall illustrates the the length $\delta_1$. The green way (way 2) between the two bumper walls illustrates the the length $\delta_2$. This is the only way where the ball has a changed speed. The second red way (way 3) from the left hand side bumper wall to the output (at offset interval end) illustrates the the length $\delta_3$. The way between the two bumper walls is used to implement the multiplication of the time offset by a constant. The vertical shift of the 3 ways is used to increase readability. \label{fig:tool_time_bumper_mult_1}}
	\end{figure}
	
	\noindent
	\newpage
	{\bf Part 2}
	In the second part the ball goes from the 
	the first bumper wall to the second bumper wall.
	\begin{itemize}
		\item $\delta_2$ is the distance between the 
		first bumper wall and the second bumper wall.
		\item $v_{b,1}(t_s)$ is the speed of ball within this part.
		\item The ball reaches the second bumper wall at time step $t_2(t_s)$.
		Note that $t_2(t_s) = t_1(t_s)+\frac{\delta_2}{v_{b,1}(t_s)}=t_s+\frac{\delta_1}{v}+\frac{\delta_2}{v_{b,1}(t_s)}$. We use $t_2=t_2(0)$.
		\item $a_2(t)$ is the acceleration of the second bumper wall. We assume that
		the acceleration is 0 at time $t\le t_2$. The effect of the bumper wall will be $v_2(t)= (t-t_2)\cdot a_2(t)$.
		\item $v_{b,2}(t_s)=v$ is the speed of the ball after the collision with the second bumper wall.
		Note that $v_{b,2}(t_s) = v_{b,1}(t_s)+v_2(t_2(t_s))$.
	\end{itemize}
	Note that 
	$$
		t_2  =  \frac{\delta_1+\delta_2}{v}\qquad\text{and}\qquad
		v_{b,2}(0)  =  v_{b,1}(0) \ = \ v\ .
	$$
	
	\noindent
	{\bf Part 3}
	In the third part the ball goes from the 
	the second bumper wall to the end line.
	\begin{itemize}
		\item $\delta_3$ is the distance between the 
		second bumper wall and the
		end line of the ball.
		\item The ball reaches the end line at time step $t_c(t_s)$.
		Note that $t_e(t_s) = t_2(t_s)+\frac{\delta_3}{v}$. We use $t_e=t_e(0)$.
	\end{itemize}
	Note that
	$$
		t_e  =  \frac{\delta_1+\delta_2+\delta_3}{v}\qquad\text{and}\qquad
		t_e(t_s)  =  t_s+\frac{\delta_1+\delta_3}{v}+\frac{\delta_2}{v_{b,1}(t_s)}\ .
	$$
	Our goal is now to determine functions for the acceleration $a_1(t_s)$ and $a_2(t_s)$ 
	such that $t_e(t_s) = 2\cdot t_s + c$ for a value $c=t_e$
	%$$
	%c \ \ = \ \ t_e \ \ = \ \ \frac{\delta_1+\delta_2+\delta_3}{v}\ .
	%$$ 
	which does not depend on $t_s$. This leads to \\
	%$$
	%t_s + \frac{\delta_2}{v} \ \ = \ \ \frac{\delta_2}{v_{b,1}(t_s)}
	%$$
	
	and therefore \hspace*{26mm}
	$
	\begin{array}[b]{rcl}
		\displaystyle t_s + \frac{\delta_2}{v} & = & 
		\displaystyle \frac{\delta_2}{v_{b,1}(t_s)} \\[3mm]
		\displaystyle v_{b,1}(t_s) & = & 
		\displaystyle \frac{v\cdot \delta_2}{v\cdot t_s + \delta_2}\ .
	\end{array}
	$
	%$$
	%v_{b,1}(t_s) \ \ = \ \ \frac{v\cdot \delta_2}{v\cdot t_s + \delta_2}\ .
	%$$
	
	Hence
	\begin{eqnarray*}
		a_1(t_1+t_s) & = & \frac{v-v_{b,1}(t_s)}{t_s} \ \ = \ \ 
		%v\cdot \frac{1-\frac{\delta_2}{v\cdot t_s + \delta_2}}{t_s} \ \ = \ \ 
		%v\cdot \frac{v\cdot t_s + \delta_2-\delta_2}{t_s\cdot (v\cdot t_s + \delta_2)} \ \ = \ \ 
		\frac{v^2}{v\cdot t_s + \delta_2}\\
		a_2(t_2+2\cdot t_s) & = & \frac{v-v_{b,1}(t_s)}{2\cdot t_s} \ \ = \ \ \frac{v^2}{2\cdot(v\cdot t_s + \delta_2)}
	\end{eqnarray*}
	%    $v_2(t_2+2\cdot t_s) = 2\cdot t_s\cdot a_2(t_2+2\cdot t_s)$
	%    $a_2(t_2+2\cdot t_s) = \frac{v_2(t_2+2\cdot t_s)}{2\cdot t_s}$
	%    $v_2(t_2(t_s)) = v - v_{b,1}(t_s)$
	
	%$ $ \hfill \proofend
	
\end{proof}

For the multiplication of the time offset by $\frac{1}{2}$ we exchange the two bumpers and the analysis will follow the same ideas as above.

\begin{figure}[htb]
	\begin{center}
		\scalebox{.35}{
			\input{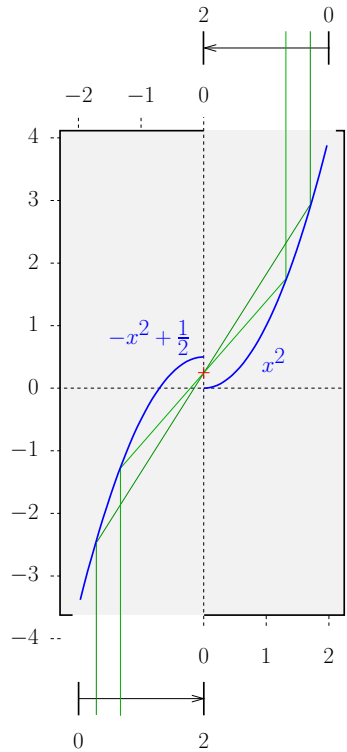}
		}
	\end{center}
	\caption{After performing a multiplication step, the order of the offset is reversed (the zero offset is moved from the left of the pinball interval to the right of the interval). This tool moves the pinball to its original order, i.e. the zero offset is moved back to the left side of the interval. Depending on the widths of the pinball interval, one might have to implement a divergent version of this tool, i.e. with $x$ coordinates going from $-4$ to $4$ instead of the presented version with $x$ coordinates going from $-2$ to $2$. Within the following construction only offsets between $0$ and $2$ (excluding $0$ and $2$) are used.
		\label{fig:revOffset_appendix}}
\end{figure}

Note that at the output interval the order of the space offset is reversed. To correct the order we have to add the construction of Figure~\ref{fig:revOffset_appendix} at the output interval of the time offset multiplier.

It is also possible to implement a multiplier for the time offset by using only moving walls (see Figure~\ref{fig:tool_time_no_bumper_mult_1}). The analysis is shown in the Appendix~\ref{delay-stack-ball_app}. %\textcolor{red}{Point to exactly where in Appendix, after the Appendix is sorted}. 
The implementation of the multiplier by moving walls can be adapted such that it can be used to handle the case where the ball does not change its speed (leading to the Ray Particle Tracing Problem). 
If %\textcolor{red}{since if is incorrect grammar} 
we simulate bumpers with moving walls the movement of both walls has to be coordinated in such a way that the different speeds and delays add up to the same effect as two bumpers. 

Finally we have to describe a gadget for adding a constant delay to the ball and a gadget for separating early balls (arriving a specific position with a time offset of $[0\ldots 1)$) from late balls (arriving at a specific position with a time offset of $[1\ldots 2)$). The constant delay can easily be generated by using some $45^\circ$ walls. Hence we will focus now on the construction of the time offset separator.

The main idea of the time offset separator is to use a moving wall which leads a ball arriving with a time offset $[0\ldots 1)$ in one distinct direction and a ball arriving with a time offset $[1\ldots 2)$ in a second distinct direction. Since we do not assume that a wall can move with an infinite speed, we have to implement a gap between the two offset intervals. This can be implemented be doubling the multipliers, i.e. instead of multiplying the time offset once with $\frac12$ (or with $2$, respectively), we multiply the time offset two times with $\frac12$ (or with $2$, respectively). Hence we multiply the time offset either with $\frac14$ or with $4$. 
%\FloatBarrier

As illustrated in Figure~\ref{fig:stack_time_main} the time offset separator only occurs after the multiplication with $4$ within the pop operation. Thus we can assume that the ball will arrive with a time offset within the interval $[0\ldots 2)$ (instead of within the standard interval $[0\ldots 1)$). Using double multiplication guarantees that the ball will either arrive with a time offset in $[0\ldots \frac12)$ or with a 
time offset in $[1\ldots \frac32)$.

% \textcolor{red}{point to exactly where in appendix}.

%
\begin{figure}[htb]
	\begin{center}
		\scalebox{.25}{
			\input{fig/tool_time_mult_04_nn.pspdftex}
		}
	\end{center}
	\caption{As in the case of the multiplication by using bumpers, we can split the way from the starting interval to the end interval into 3 part: The first red way (way 1) from the input (at offset interval start) to the right hand side moving wall, the green way (way 2) between the two moving walls, and the second red way (way 3) from the left hand side moving wall to the output (at offset interval end). Way 2 
		%is the only way where the ball has a changed speed. This part of the way 
		is used to implement the multiplication of the time offset by a constant. 
		%The vertical shift of the 3 ways is used to increase readability. 
		\label{fig:tool_time_no_bumper_mult_1}}
%\end{figure}

%\begin{figure}[htb]
	\begin{center}
		\scalebox{.17}{
			\input{fig/tool_if_else_time_01_pb.pspdftex}
		}
	\end{center}
	\caption{Construction for separating the ball arriving with a time offset in the time interval $[0\ldots\frac12)$ 
		from the ball arriving with a time offset in the time interval  $[1\ldots\frac32)$. Then using the distance of $d-1$ instead of
		$d$ shifts the base of the time interval of the second interval $[1\ldots\frac32)$ from $1$ to $0$. The time gaps in the offset between $\frac12$ to $1$ (and from $\frac32$ to $2$) are necessary to give the moving walls the time either to move out of the way of the ball or to move into its initial position.
		\label{fig:if_else_time_pb}
	}
\end{figure}
\FloatBarrier

This allows the separating 
moving wall to move out of the way and to move back. The construction is illustrated in Figure~\ref{fig:if_else_time_pb} and is analyzed in Appendix~\ref{delay-stack-ball_app}.

%

% see Appendix

% !TEX root = pinball_main.tex

\subsection{Implementing a Stack Using the Space Offset of the Pinball}
\label{space_offset_stack}

The space offset stack can be implemented using a system of plane walls, one-way gates and parabola walls. Note that the distance the ball has to pass to reach the focus is independent of its starting position on the starting interval. This distance is given by the distance from the starting line segment to the point where the ball hits the parabola wall plus the distance between this point to the focus.

Let us assume that a ball going down in the direction of the $x$-axis starts at an arbitrary point $\ell$ on the line segment defined from point $(0,h)$ to point $(2,h)$ at time $t$. Assume again that the ball bounces off a parabola $(x, a\cdot x^2)$ with focus $(0, \frac{1}{4a})$. If the ball starts at point $(\ell,h)$ then the distance traveled by the ball to the focus is given by 
$$
h - a\cdot \ell^2 \quad + \quad a\cdot \ell^2 + \frac{1}{4a} \qquad = \qquad h+\frac{1}{4a}
$$
which is independent of the starting point at the starting line segment (see Figure~\ref{fig:tool_parabola_mirror_02_rpt}). If we assume that 
within each time unit the ball can pass through one space unit, then within a construction like in Figure~\ref{fig:mult05_01_rpt}, it will reach the focus at time $t+h+\frac{1}{4a}$, independent of $\ell$.

\begin{figure}[htb]
	\centering
	\begin{minipage}[t]{0.3\textwidth}
		\centerline{
			\scalebox{.23}{
				\input{fig/tool_parabola_mirror_02.pspdftex}
		}}
		\caption{A parabola with points $(x, a\cdot x^2)$ and focus $(0, \frac{1}{4a})$.
			\label{fig:tool_parabola_mirror_02_rpt}}
		%\end{figure}
	\end{minipage} 
	\hspace*{0.05\textwidth}
	\begin{minipage}[t]{0.53\textwidth}
		%\begin{figure}[H]
		\centerline{
			\scalebox{.22}{
				\input{fig/tool_mult_0.5_01.pspdftex}
		}}
		\caption{Multiplication of the offset by $\frac{1}{2}$ is a main tool for implementing a push operation. Here we show the reverse ordering
			when two separate balls enter the starting interval line at different positions.	\label{fig:mult05_01_rpt}}
	\end{minipage} 
\end{figure}

\begin{lemma}
	The Pinball Wizard problem can simulate a stack which is implemented by the space offset of the ball within a system of parabola walls.
	\label{le:pinball_space_offset_stack}
\end{lemma}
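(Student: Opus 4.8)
The plan is to mirror the time-offset construction of Lemma~\ref{th:Pinball_Bumper}, replacing the bumper-induced time dilation by the focusing geometry of confocal parabola walls. The single structural fact I rely on is the equidistance property already recorded before the statement (and in Figure~\ref{fig:tool_parabola_mirror_02_rpt}): a downward beam parallel to the parabola's axis, entering on the segment from $(0,h)$ to $(2,h)$, reflects off $y=a x^2$ to the common focus $(0,\tfrac{1}{4a})$ along paths of equal length $h+\tfrac{1}{4a}$. Because the ball keeps unit speed throughout this gadget (no bumpers or moving walls occur here), equal path length means the \emph{time} offset is untouched by any operation on the space offset. This is precisely the independence requirement stated at the start of Section~\ref{lowerbound_pinball}, so the space stack and the time stack cannot interfere.

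The core gadget is a two-mirror beam compressor/expander built from two \emph{confocal} parabolas, as drawn in Figure~\ref{fig:mult05_01_rpt}: the incoming collimated beam $x=\ell$, $\ell\in[0,2]$, is focused by $y=x^2$ (focal length $\tfrac14$) through the shared focus $(0,\tfrac14)$ and then re-collimated by $y=-2x^2+\tfrac38$, whose vertex $(0,\tfrac38)$ places its focus also at $(0,\tfrac14)$, giving focal length $\tfrac18$. I would first verify that every ray leaves the second parabola again parallel to the axis, and then compute the output abscissa as a function of $\ell$. The beam-width magnification is the ratio of focal lengths, $(\tfrac18)/(\tfrac14)=\tfrac12$, so the output offset equals $\tfrac12\,\ell$ up to the order reversal caused by crossing the focus; this realizes \textbf{push~0}. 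Running the same pair of parabolas in the opposite order expands the beam by the factor $2$ and realizes the multiplication needed for \textbf{pop}. The remaining ingredients are purely affine: the additive constants $+\tfrac12$ (for \textbf{push~1}) and $-1$ (for \textbf{pop} with output $1$) are translations of the offset interval implemented by a pair of $45^\circ$ plane walls, and the order reversal introduced by each focusing step is undone by the parabola tool of Figure~\ref{fig:revOffset_appendix}.

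For the \textbf{pop} operation I still have to read off the top bit. After the $\times 2$ expansion the offset lies in $[0,2)$, and the bit is $0$ precisely when it lies in $[0,1)$ and $1$ when it lies in $[1,2)$. Unlike the time-offset separator, which needed a moving wall and therefore a gap between the two sub-intervals, here the separation is spatial: a single stationary plane wall (or one-way gate) placed at the boundary $x=1$ routes the two halves into the two output branches, and the $1$-branch is composed with the $-1$ translation above. This is exactly why the space construction uses only plane walls, one-way gates, and parabola walls, and why a single multiplication by $\tfrac12$ or $2$ suffices instead of the double multiplication required in the time case.

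I expect the main obstacle to be the precise geometric verification of the compressor: showing that the confocal pair really does send a parallel beam to a parallel beam and that the resulting offset map is \emph{exactly} the linear map $\ell\mapsto m\,\ell$ with $m$ equal to the focal-length ratio (with order reversal), rather than merely approximately so. This amounts to analyzing the reflection, at the second parabola, of a ray passing through the common focus and checking collimation for every $\ell$ in the interval, while simultaneously confirming that the two-parabola path length is constant in $\ell$ by applying the equidistance property a second time, so that the time offset is genuinely preserved. Once this ``linear map plus constant path length'' claim is established, assembling push~0, push~1, and the two pop branches into a stack and invoking the reversal correction of Figure~\ref{fig:revOffset_appendix} is routine.
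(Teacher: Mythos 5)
Your proposal is correct and follows essentially the same route as the paper's proof: confocal parabola pairs (exactly those of Figure~\ref{fig:mult05_01_rpt} and Figure~\ref{fig:mult2_01_appendix}) realize the $\times\frac12$ and $\times 2$ maps, a shift of the offset interval handles push~1, a purely spatial split after the $\times 2$ step handles pop, the reversal is corrected via Figure~\ref{fig:revOffset_appendix}, and the equidistance property guarantees independence from the time-offset stack. The only detail the paper adds that you omit is the convention that the stack bottom always contains at least two $1$ digits, so the offset never hits the boundary values $0$, $\frac12$, or $1$.
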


\begin{proof}
	We achieve push and pop operations using the multiplication of the offset by $2^{-1}$ or by $2$. We assume starting interval between 0 and 2.	For the multiplication of the offset by $\frac{1}{2}$, two parabolas are placed such that the second parabola is flipped as seen in Figure~\ref{fig:mult05_01_rpt}. Analogously, multiplying the offset by $2$ can be done using a similar construction (see e.g. Figure~\ref{fig:mult2_01_appendix} in Appendix~\ref{space_stack_app}). As with the time offset stack, both constructions reverse the ordering of space offset of the ball. Thus a similar construction as in Figure~\ref{fig:revOffset_appendix} can be used to correct the ordering. 
	
	Larger offsets are allowed to deal with pushing 1 to or popping 1 from the stack, as both operations require the addition of a 1 to the offset (that is, a shift in the offset interval).

	\begin{figure}[htb]
		\centering
		\begin{minipage}[t]{0.45\textwidth}
			\centerline{
				\scalebox{.15}{
					\input{fig/tool_push_0_01.pspdftex}}
			}
			\caption{Multiplying the offset from the interval $[0,1)$ by $\frac{1}{2}$ implements a push of the Boolean value
				0 to the stack. The yellow area represents area of possible offsets.
				\label{fig:push0offse}}
			%\end{figure}
		\end{minipage} 
		\hspace*{0.05\textwidth}
		\begin{minipage}[t]{0.45\textwidth}
			%\begin{figure}[H]
			\centerline{
				\scalebox{.15}{
					\input{fig/tool_push_1_01.pspdftex}
			}}
			\caption{Shifting the offset from the interval $[0,1)$ to the interval $[1,2)$ and 
				multiplying the resulting offset by $\frac{1}{2}$ implements a push of the Boolean value
				1 to the stack. The yellow area represents area of possible offsets.
				\label{fig:push1offset}}
		\end{minipage} 
	\end{figure}
	
	For push 0 we multiply the space offset by $\frac12$ (see Figure~\ref{fig:push0offse}).
	To push 1 we first shift the starting offset interval from $[0\ldots 1)$ to $[1\ldots 2)$ before multiplying the space offset by $\frac12$ (see Figure~\ref{fig:push1offset}), and for pop 1, after multiplying by $2$, the possible output offset values are split in two intervals, each going from 0 to 1 as seen in Figure~\ref{fig:if_else_time_space}. Thus if the top entry of the stack is a 0, the ball with the modified offset will show up on the left exit. In the case where it is a 1, the ball will accrue to the right exit. To separate these two cases we have to split the interval after the multiplication into two intervals to get the resulting stack and the output of the pop operation. 
	
	\begin{figure}[htb]
		\begin{center}
			\scalebox{.13}{
				\input{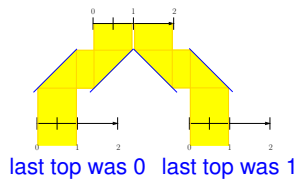}
			}
		\end{center}
		\caption{Construction for separating the arrival of the ball within the space 
			interval $(0\ldots 1)$ from the ball's arrival within the space interval $(1\ldots 2)$.
			\label{fig:if_else_time_space}
		}
	\end{figure}
	
%	Within this construction we might have an undefined ball behavior, iff the ball has an offset of $0$ or $\frac12$. Therefore we assume that the ball will never have these space offsets. 
	By assuming the bottom of the stack always contains at least two 1 digits, the offset can never have the values $0,\frac{1}{2},$ and $1$. To guarantee that these edge cases never occur we can assume a bottom symbol on the stacks which will never be removed. 
	%This has to be ensured by the logical construction of the 2-stack PDA.
\end{proof}

The full proof as well as constructions associated with Lemma~\ref{le:pinball_space_offset_stack} are given in Appendix~\ref{space_stack_app}.

One should highlight here that from Figure~\ref{fig:tool_parabola_mirror_02_rpt}, it can be seen that for every offset of the ball that might occur within each of the push 0, push 1 and pop constructions, the ball will always need the same amount time to go from the horizontal start line to the finish line. This is possible because an extra delay can be introduced on the ball's movement by an additional substructure such that its time of arrival at the end line is well-defined. Hence, the operations of the space offset stack is independent of the time offset stack.

Note that both pop operations (for the time offset stack as well as for the space offset stack) results in two potential output intervals (one for the pop result 0 and one for the pop result 1). When we would like to continue with our simulation of the two-stack PDA we have to combine the two intervals again. For this we can use a one-way gate, which guarantees a perfect overlapping of the two intervals (the 0 offset of the first interval will overlap with the 0 offset of the second interval and the 1 offset of the first interval will overlap with the 1 offset of the second interval).

To illustrate the behavior of space as well as time offset intervals w.r.t. push and pop operations we will showcase a short example in the following :

\begin{figure}[htb]
	\begin{center}
		\scalebox{.25}{
			\input{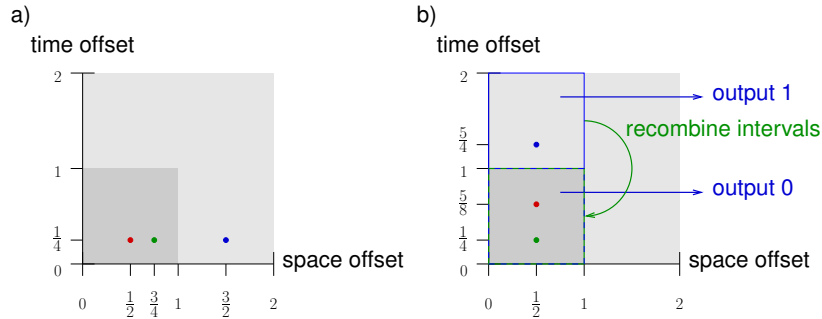}
		}
	\end{center}
	\caption{Behaviour of the time and space offset intervals according to the (a) push(1) and (b) pop operations. The red dot denotes the initial position of the stack, the blue dot shows the intermediate position and the green dot shows the final position of the stack
		\label{fig:offsets_01_main}}
\end{figure}
\FloatBarrier

For the behaviour of the offset intervals according to stack operations (see Figure~\ref{fig:offsets_01_main} a), starting with the stack values $01$ (represented by the binary value $0.01_2$) for the time offset stack and $1$ (represented by the binary value $0.1_2$) for the space offset stack (red dot), we begin by performing a push$(1)$ operation to the space offset stack: First add 1 to the value of the space offset stack (blue dot), and then multiply by $2^{-1}$ (green dot). This leads to the stack content $11$ (represented by the binary value $0.11_2$).

In Figure~\ref{fig:offsets_01_main} b), we start with stack values $101$ (represented by the binary value $0.101_2$) for the time offset stack and $1$ (represented by the binary value $0.1_2$) for the space offset stack (red dot) and perform a pop operation on the time offset stack as follows: First multiply the value of the time offset stack by $2$ (blue dot). If the result is in the range between 1 and 2, the output will be 1, otherwise if the result is in the range between 0 and 1, the output will be 0. Finally we recombine the two possible intervals (by moving the upper interval to map on the lower interval, resulting in the green dot). This leads to the stack content $01$ (represented by the binary value $0.01_2$) in the time offset stack. Note that all the operations should not modify the space offset value.

One should highlight here that from Figure~\ref{fig:tool_parabola_mirror_02_rpt}, one can see that for every offset of the ball that might occur within each of the push 0, push 1 and pop constructions, the ball will always need the same amount of time to go from the horizontal start line to the finish line. This is possible because an extra delay can be introduced on the ball's movement by an additional substructure such that its time of arrival at the end line is well-defined. Hence, the operations of the space offset stack is independent of the time offset stack.\\

Now, using the proofs of Lemma~\ref{th:Pinball_Bumper}, Lemma~\ref{le:pinball_space_offset_stack} as well as Lemma~\ref{le:Pinball_noBumper} (discussed in Appendix~\ref{delay-stack-ball_app}), we conclude that -

\begin{theorem}
	The 2D Pinball Wizard problem with one-way gates, plane, parabolic walls and moving walls
	simulate a two-stack pushdown automaton (PDA) such that each step is represented by a constant number of reflections. This also holds if we replace the moving walls by bumpers. 
	\label{th:Pinball_noBumper}
\end{theorem}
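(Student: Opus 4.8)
The plan is to assemble the single-stack gadgets of Lemma~\ref{th:Pinball_Bumper} (time offset) and Lemma~\ref{le:pinball_space_offset_stack} (space offset) into a simulation of an arbitrary two-stack pushdown automaton, which is Turing-complete \cite{hopcroft1979introduction}, so that it suffices to simulate one such automaton. First I would fix the encoding: the contents of the first stack, a finite binary string, is stored as a binary fraction in the time offset $\Delta t \in (0,2)$, and the contents of the second stack is stored analogously in the space offset $\Delta s$. Under this encoding the arithmetic rules $(1)$--$(4)$ realized in Lemma~\ref{th:Pinball_Bumper}---halving for a push, doubling for a pop, and the $\pm 1$ shift distinguishing the stored bit---are exactly the push and pop operations, while the separator gadget (Figure~\ref{fig:if_else_time_pb}, and Figure~\ref{fig:if_else_time_space} for space) reads off the top bit by routing the ball into one of two output intervals. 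The same constructions apply verbatim to the space offset by Lemma~\ref{le:pinball_space_offset_stack}.

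Next I would build the finite control. Since the automaton has a fixed finite set of states and a transition function depending only on the current state and the two top-of-stack symbols, there are only finitely many distinct $(\text{state},\text{top}_1,\text{top}_2)$ cases. For each I would place one \emph{station} in the plane consisting of a pop/read on each stack to expose the two top bits, the separator gadgets that fork the trajectory according to these bits, the sequence of push and pop gadgets prescribed by the transition, and finally walls and one-way gates steering the ball into the station of the successor state. When a pop is meant only to inspect the top without removing it, the revealed bit is pushed back using the corresponding push gadget. Because the machine is fixed, the number of gadgets traversed per simulated step is bounded by a constant, so each step costs a constant number of reflections, as claimed.

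The crucial point I would then verify is global independence of the two encodings across the whole composition, not merely within a single gadget. I would invoke the two structural invariants already identified: every admissible trajectory inside a space-offset gadget has the same length, so it consumes the same time and leaves $\Delta t$ untouched; and every admissible trajectory inside a time-offset gadget is parallel, so it leaves $\Delta s$ untouched. The additional fixed-delay substructure (the $45^\circ$ walls noted after Lemma~\ref{le:pinball_space_offset_stack}) normalizes the arrival time at each station's exit line so that the time offset entering the next station is well-defined independently of the space offset, and the one-way gate recombination merges the two post-pop intervals so that the bit-$0$ and bit-$1$ sub-intervals realign. I would also fix the interval bookkeeping consistently: to give the moving walls time to clear and reset, pops use a double multiplication (by $4$ rather than $2$) with a gap between the two sub-intervals (Figure~\ref{fig:stack_time_main}), and a permanent bottom-of-stack marker keeps the offset away from the degenerate values $0,\tfrac12,1$.

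Finally, the variant statement follows because the time-offset multiplier can be realized either by paired bumpers (Lemma~\ref{th:Pinball_Bumper}) or by coordinated moving walls (Lemma~\ref{le:Pinball_noBumper}); substituting one for the other leaves the rest of the construction---and hence the simulation---intact. The main obstacle I expect is not any single gadget but the routing and synchronization of the finite control: ensuring that the forking induced by the separators, the push-back of inspected symbols, the recombination of intervals, and the steering into the successor station all compose so that both offsets survive every step exactly, and that the gaps required for the moving walls to reset neither accumulate nor desynchronize the two stacks.
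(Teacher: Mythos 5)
Your proposal is correct and follows essentially the same route as the paper: it combines Lemmas~\ref{th:Pinball_Bumper}, \ref{le:pinball_space_offset_stack} and~\ref{le:Pinball_noBumper}, encodes the two stacks in the time and space offsets, realizes the finite control as a bounded decision tree of pop/separator/push gadgets recombined by one-way gates, and justifies independence via the equal-length and parallel-trajectory invariants together with the constant-delay normalizers --- which is precisely the argument the paper gives in its one-line proof and the accompanying simulation sketch in Appendix~\ref{sec:SimDTM}. Your write-up is, if anything, more explicit than the paper's about the global synchronization obligations.
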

Since a two-stack PDA can simulate any deterministic Turing machine, we conclude the following:

\begin{corollary}
	Deciding the 2D Pinball Wizard problem with one-way gates, plane, parabolic and moving walls is as hard as solving the Halting problem.
\end{corollary}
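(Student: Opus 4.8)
The plan is to obtain the corollary as a straightforward reduction from the Halting problem, chaining the simulation established in Theorem~\ref{th:Pinball_noBumper} together with the classical equivalence between two-stack pushdown automata and Turing machines. First I would recall the standard fact (see~\cite{hopcroft1979introduction}) that a deterministic Turing machine $M$ can be simulated by a two-stack PDA: one stack stores the tape contents to the left of the head and the other the contents to the right, so that a single move of $M$ (read, write, shift) is realized by a bounded number of push and pop operations on the two stacks. In particular, $M$ halts on input $w$ if and only if the corresponding two-stack PDA reaches its halting configuration.

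Next I would invoke Theorem~\ref{th:Pinball_noBumper}, which supplies, for any two-stack PDA, a Pinball Wizard instance whose components (plane and parabolic walls, one-way gates, and moving walls) encode the two stacks as the space offset and the time offset of the ball, with each PDA step realized by a constant number of reflections. The key point for the reduction is effectivity: given the finite description of $M$ and $w$, the gadget constructions of Lemma~\ref{th:Pinball_Bumper}, Lemma~\ref{le:pinball_space_offset_stack}, and Lemma~\ref{le:Pinball_noBumper} are assembled by a computable procedure and involve only finitely many components with rational parameters, so the resulting Pinball Wizard instance is a finite object produced algorithmically from the pair $(M,w)$.

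I would then fix the target. Following the construction, the halting configuration of the simulated PDA corresponds to the ball entering a distinguished output gadget; placing the target point at the exit of that gadget ensures that the ball reaches the target if and only if the simulated PDA, and hence $M$, halts on $w$. Composing the three steps yields a many-one reduction from the Halting problem to the 2D Pinball Wizard problem, so deciding the latter is at least as hard as deciding the former.

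The main obstacle I anticipate is not the high-level reduction but the bookkeeping required to make the simulation faithful over an unbounded number of steps. The offsets encoding the stacks must remain within the admissible interval $(0\ldots 2)$ after every push and pop; the two stacks must stay genuinely independent (equal path lengths for the space-offset operations and parallel paths for the time-offset operations, as discussed before Section~\ref{subsect:Time_Delay_of_the_Pinball}); and the bottom-of-stack marker introduced in the proof of Lemma~\ref{le:pinball_space_offset_stack} must prevent the degenerate offsets $0,\tfrac12,1$ from ever arising, while the one-way gate used to recombine the two pop-output intervals must realign the two cases exactly. These invariants are precisely what the earlier lemmas establish, so once they are assumed the corollary follows immediately.
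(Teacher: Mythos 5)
Your proposal follows the same route as the paper: the corollary is obtained by composing the classical simulation of a deterministic Turing machine by a two-stack PDA (left-of-head and right-of-head tape contents on the two stacks, exactly as in the paper's Appendix~\ref{sec:SimDTM}) with Theorem~\ref{th:Pinball_noBumper}, and placing the target at the exit corresponding to the halting configuration. The additional points you raise on effectivity of the gadget assembly and on the offset invariants are sound and are precisely what the cited lemmas supply, so the argument is correct.
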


\newpage

%\input{05-Upperbound}
% !TEX root = pinball_main.tex

\section{Conclusions and Open Problems}
\label{conclusions}

Inspired by pinball machines, the Pinball Wizard problem involves navigating a maze composed of stationary and moving walls, one-way gates, parabola walls and bumper walls. The objective is to determine whether a pinball can reach a designated target.
 We show that the Pinball Wizard problem in two dimensions can simulate a two-stack PDA, which is equivalent in power to a single-tape deterministic Turing machine. Thus, solving instances of this problem is at least as hard as deciding the Halting problem.

While the billiard problem with static walls—also known as ray tracing with mirrors—is known to be Turing-complete in three dimensions, as established independently by Moore~\cite{moore1990unpredictability} and Reif et al.~\cite{Reif1994ComputabilityAC}, our work may contribute to resolving the open question of whether Turing-completeness can also be achieved in two dimensions.

Another interesting direction for future research is to investigate whether a simplified model with a reduced set of components can still achieve Turing-completeness. Currently, our construction relies on idealized assumptions such as arbitrary precision and perfect reflection. It remains an open question how the computational complexity would change under more realistic physical constraints, including damping, friction, finite ball diameters and random disturbances. Future work could also explore whether a smaller set of components suffices for the Pinball Wizard problem specifically and what precision bounds are necessary for reliable computation.

%\newpage
%\bibliographystyle{plain}
\bibliographystyle{unsrt}  
\bibliography{lit_for_illumination-mirror_tools_for_time}

\newpage

% !TEX root =  pinball_main.tex
\newpage
\appendix

\section{Appendix}\label{sec:lemmaappendix}
In this Appendix we present all technical details and full proofs.
We start with presenting more details and the analysis of the stack implementation using space and time offsets for the Pinball Wizard problem. Thereafter we present the Ray Particle Tracing problem. The simulation of a Turing Machine using both stacks follows standard techniques therefore we skip a full discussion on it.

%Then we will continue with the simulation of a Turing Machine.

% !TEX root = pinball_main.tex

\section{Implementing a Stack Using the Space Offset of the Pinball}
\label{space_stack_app}

Recall that to simulate a Turing machine we would like to implement two independent stacks by two different properties of the ball. Within the implementation of each stack, we have to guarantee that no operation on the first stack (i.e. pop or push) influences the characteristics of the second stack and vice versa. 
%The implementation of the first stack follows the implementation of~\cite{tr_adejoh_Complexity_RayTracing_2024} (with some minor modifications), i.e. 
For the implementation of the first stack, we have used the time offset when the ball reaches a specific line interval. For the second stack
we will use the space offset of a ball according to a base position when crossing this line interval. 

%1. Within the implementations of each stack, we have to guarantee the on stack operations according to the first stack (i.e. pop or push) influences the characteristics of the second stack and vice versa.
%2. The second stack will be implemented based on the time offset when a ray impulse arises. 
For implementing pop and push operations, we use systems of 
walls, one way gates, and parabola walls. 

Before going into the details it should be mentioned that
for every parabolic wall given by the formula $a\cdot x^2+b$ with $a,b\in\Real$ and
$a>0$, every ball coming from above the wall and going 
parallel to the $y$-axis which hits the wall has to pass through the 
focus of the parabola given by $(0,\frac{1}{4\cdot a}+b)$, 
see e.g. Figure~\ref{fig:focusPara_appendix}.

\begin{figure}[htb]
\begin{center}
\scalebox{.45}{
\input{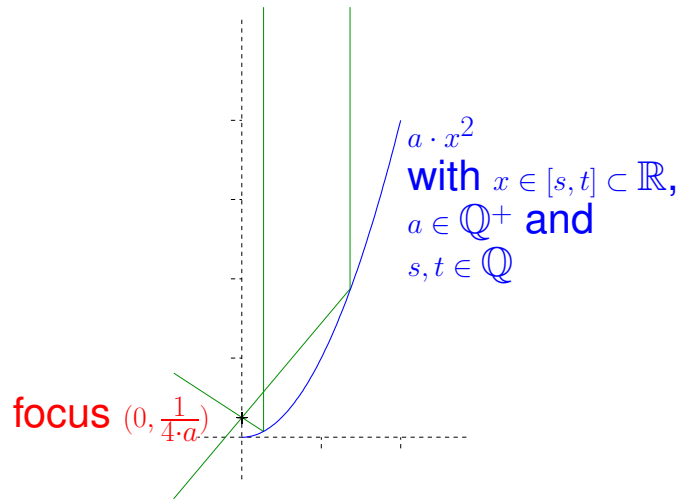}
}
\end{center}
\caption{Focus of a parabolic wall.\label{fig:focusPara_appendix}}
\end{figure}

Recall that this stack will be implemented by the (horizontal) space offset and the other stack by the time offset. Hence it is necessary, that the 
space offset has no influence on the time that is needed for a ball to pass through each of the used parts of the construction of the space offset stack. 
For this let us assume that a ball going down in the direction of the $y$-axis starts at an arbitrary point of the line segment 
from point $(0,h)$ to point $(2,h)$ at time $t$. Let us assume that the ball is reflected at a parabola $(x, a\cdot x^2)$ with focus 
$(0, \frac{1}{4a})$. Note that the distance the ball has to pass to reach the focus is given by the distance from the starting line segment to the 
point where the ball hits the parabola plus the distance between this point and the focus. 
In Subsection~\ref{space_offset_stack} we have already seen that this distance (and therefore the traveling time of the ball) is independent of the concrete starting position within the line segment 
from point $(0,h)$ to point $(2,h)$.

%Let assume that the ball starts at point $(\ell,h)$ the 
%distance is given by 
%$$
%h - a\cdot \ell^2 \quad + \quad a\cdot \ell^2 + \frac{1}{4a} \qquad = \qquad h+\frac{1}{4a}
%$$
%which is independent from the starting point at the starting line segment (see Figure~\ref{fig:tool_parabola_mirror_02_rpt}). If we assume that 
%within each time unit the ray particle can pass though one space unit, it will reach the focus at time $t+h+\frac{1}{4a}$, independent from $\ell$.

%\begin{figure}[htb]
%\begin{center}
%\scalebox{.4}{
%\input{fig/tool_parabola_mirror_02.pspdftex}
%}
%\end{center}
%\caption{Since the distance between the parabola point $(x, a\cdot x^2)$ and the focus $(0, \frac{1}{4a})$ 
%is $a\cdot x^2+\frac{1}{4a}$ every vertical ray particle
%starting at a line segment from $(0,h)$ to $(\sqrt{h}-\varepsilon, h)$  
%at time $t$ (going down in the direction of the $x$-axis)
%reach the focus all at $t+h+\frac{1}{4a}$, independent from the concrete starting point.\label{fig:tool_parabola_mirror_02_appendix}}
%\end{figure}

We assume that a ball without an offset passes a specific horizontal line at a 
specific position. The (horizontal) distance between this position and the 
actual position where the ball passes the line will be called the offset.
For the offset of the ball we assume in the following, 
values between 0 and 1 (excluding the extreme values). Given an offset of $s\in (0\ldots 1)$ we can implement a push of a bit 
$b\in\{0,1\}$ by 
$$
s' = (s+b)\cdot 2^{-1}
$$
and a pop by 
$$
s' = (2\cdot s) - \lfloor 2\cdot s\rfloor\qquad\text{and determining}\qquad 
b = \lfloor 2\cdot s\rfloor
$$
One can see that the multiplication by $2$ or by $2^{-1}$ is a major tool for these operations. To implement this we use a combination of parabola walls. \\
For the multiplication with $2^{-1}$ (see Figure~\ref{fig:mult05_01_rpt})
$$
- 2\cdot x^2+\frac{3}{8} \quad\text{ for } x\in [-1, 0]\quad\text{ and } \qquad x^2
\quad\text{ for } x\in [0, 2]\ .
$$
%\begin{figure}[H]
%\begin{center}
%\scalebox{.4}{
%\input{fig/tool_mult_0.5_01.pspdftex}
%}
%\end{center}
%\caption{Multiply the offset by $\frac{1}{2}$ and reverse the ordering of the ray particles. We use the binary digits of the decimal place of the offset as a representation of a stack of Boolean values. Hence multiplexing the offset by $\frac{1}{2}$ is a main tool for implementing a push operation.
%\label{fig:mult05_01_appendix}}
%\end{figure}
For the multiplication with $2$ (see Figure~\ref{fig:mult2_01_appendix})
$$
- \frac{1}{2}\cdot x^2+\frac{3}{4} \quad\text{ for } x\in [-4, 0]\quad\text{ and } \qquad x^2
\quad\text{ for } x\in [0, 2]\ .
$$
In our figures we have used the possibility of having larger offsets. This 
is based on the requirement that we have to add a $1$ to the offset in some 
scenarios, especially if we would like to push a $1$ to our stack, or we have 
to pop a $1$ from the stack. In these cases the construction takes care of 
this larger offset, and either divides the value of the offset by 2 or splits the 
interval of possible offset values into two\textemdash with each interval going from $0$ to $1$.

\begin{figure}[H]
\begin{center}
\scalebox{.34}{
\input{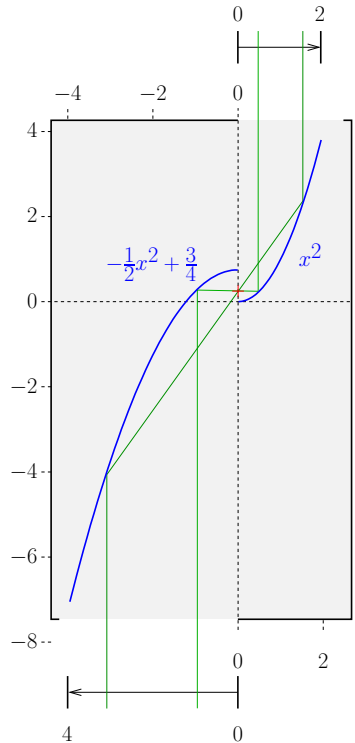}
}
\end{center}
\caption{Multiply the offset by 2 and reverse the ordering of the possible offsets of the ball. We use the binary digits of the decimal place of the offset as a representation of a stack of boolean values. Hence multiplying the offset by 2 is a main tool for implementing a pop operation.\label{fig:mult2_01_appendix}}
\end{figure}

Note that after using one of these constructions the offset interval is reversed. 
If on the input the offset was assumed to move the ball to the right of the $0$-line, it should also move the ball to the right of the $0$-line at the output and not to the left. To correct this 
we can use two extra parabolic walls (see Figure~\ref{fig:revOffset_appendix})
%If on the input the offset was assumed to move the ray to the right of the $0$-line, it is at the output moving the ray to the right of the $0$-line.
$$
- x^2+\frac{1}{2} \quad\text{ for } x\in [-2, 0]\quad\text{ and } \qquad x^2
\quad\text{ for } x\in [0, 2]\ .
$$

%Here was the Figure 20 before (order of offset reversal, moved to Section Lower Bounds Pinball Wizard in main submission part)

Using these modules we can now implement the required stack operations for the space offset stack, 
see Figure~\ref{fig:push0offset_appendix}, \ref{fig:push1offset_appendix}, and \ref{fig:popoffset_appendix}.
The equations associated with each operation can be found in Section~\ref{eqn:push_pop_equations}.
For push 0 we simply divide the offset by 2 (see (1)). 
For push 1 we first shift the space offset interval of the ball one space unit to the right, implement an addition of 1 and 
afterwards divide the offset by 2 (see (2)). The pop module has two possible 
output intervals for the offset; the first is used if the pop 
results in a $0$ and the second if the pop results in a $1$. To get these two different 
intervals we multiply the offset first with 2 (see (3) and (4)).

\begin{figure}[htb]
\centering
\begin{minipage}[t]{0.45\textwidth}
\centerline{
		\scalebox{.22}{
			\input{fig/tool_push_0_01.pspdftex}}
		}
	\caption{Multiplying the offset from the interval $[0,1)$ by $\frac{1}{2}$ implements a push of the Boolean value
0 to the stack. The yellow area represents area of possible offsets.
		\label{fig:push0offset_appendix}}
%\end{figure}
\end{minipage} 
\hspace*{0.05\textwidth}
\begin{minipage}[t]{0.45\textwidth}
%\begin{figure}[H]
\centerline{
		\scalebox{.22}{
			\input{fig/tool_push_1_01.pspdftex}
		}}
	\caption{Shifting the offset from the interval $[0,1)$ to the interval $[1,2)$ and 
multiplying the resulting offset by $\frac{1}{2}$ implements a push of the Boolean value
1 to the stack. The yellow area represents area of possible offsets.
		\label{fig:push1offset_appendix}}
\end{minipage} 
\end{figure}

\begin{figure}[htb]
\begin{center}
\scalebox{.14}{
\input{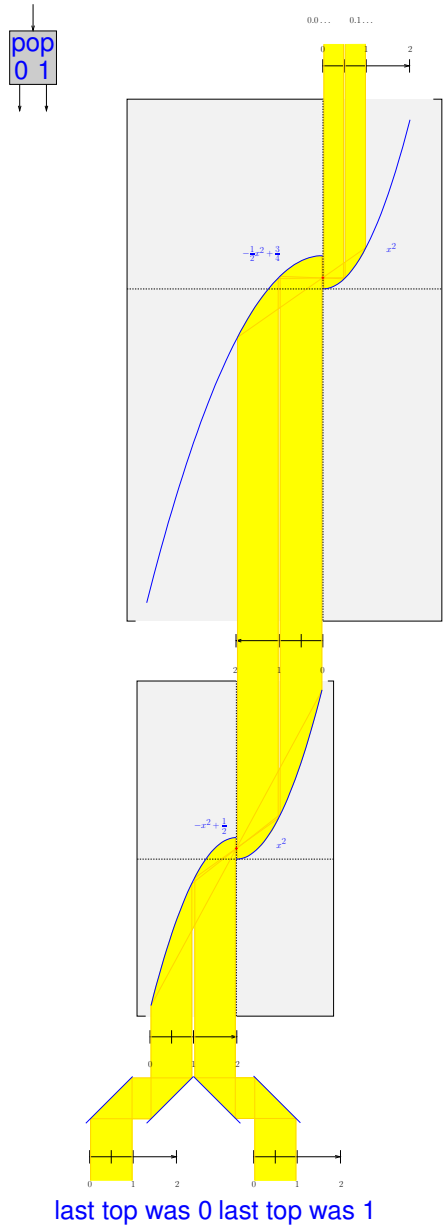}
}
\end{center}
\caption{Multiplying the offset by 2 is a basic part for implementing the pop operation.
If the top entry of the stack is 0 then the ball with modified offset (the top element is removed) will accrue on the left exit. In the case it is 1 the ball with modified offset will accrue on the right exit.
If we assume that the bottom of the stack always contain at least two 1 digits, then our offset can never assume the values $0,\frac{1}{2},$ and $1$. The yellow area represents area of possible offsets.\label{fig:popoffset_appendix}}
\end{figure}
\FloatBarrier
%\clearpage

Based on the observation in Figure~\ref{fig:tool_parabola_mirror_02_rpt}
one can see that for every offset of a ball that might occur within every single construction of this Section, the ball will always need the same amount of time to go from the horizontal starting line above the construction 
to a horizontal finishing line below the construction. 
%But the time for the different constructions might differ.

To implement a stack we only need to use the construction for implementing a push 0, a push 1, and a pop operation (Figure~\ref{fig:push0offset_appendix},~\ref{fig:push1offset_appendix},~\ref{fig:popoffset_appendix}). 
Note that the time for the three different constructions might differ. Using the construction given in 
Figure~\ref{fig:tool_add_delay}, we can add to any ball some delay such that for every offset within 
any of the three constructions, a ball requires the same time to reach a 
predefined horizontal finishing line from a predefined horizontal starting line.

\begin{figure}[htb]
\begin{center}
\scalebox{.25}{
\input{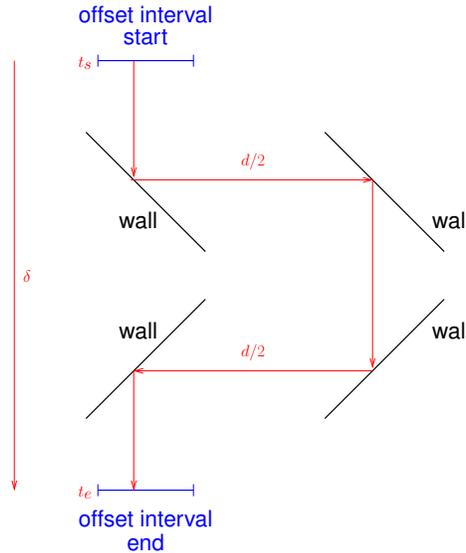}
}
\end{center}
\caption{Add the delay of $d$ to a ball. Note that $d$ has to be large enough such that there is no 
overlap of the two upper walls or lower walls.\label{fig:tool_add_delay}
}
\end{figure}
\FloatBarrier

Note that the different sub-constructions are connected by some corridors which act like optical fibres.
Based on the length of such a corridor connection, a ball might have a connection-specific delay. 
By using the construction of Figure~\ref{fig:tool_add_delay} we can guarantee that any ball arrives at
any sub-construction at a well defined time slot.

The main idea of implementing a TM (or a PDA or a FA) which might use a binary stack instead of a tape is that we use the 
two different horizontal finishing lines of the pop construction to implement the required behaviours of the
TM after getting a 0 or a 1 from the stack. After the constructions which determine the different 
transitions of the TM (or a PDA or a FA)\textemdash and may be after pushing some values to the stack\textemdash we might have to rejoin the 
two different space offset intervals in such a way that the required time for any ball to go 
from both input offset intervals to output interval is constant.
This can be implemented by the construction of Figure~\ref{fig:tool_rejoin_intervals_01}.

\begin{figure}[htb]
		\centerline{
			\scalebox{.20}{
				\input{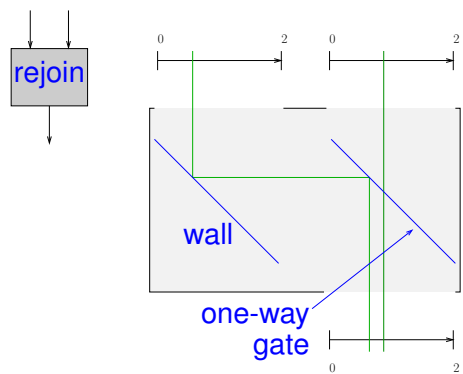}
			}
		}
		\caption{A simple rejoin tool to be used after the splitting at a pop operation.
			\label{fig:tool_rejoin_intervals_01}}
\end{figure}
\FloatBarrier

%\begin{figure}[H]
%\begin{center}
%\scalebox{.3}{
%\input{fig/tool_rejoin_intervals_02_n.pspdftex}
%}
%\end{center}
%\caption{Based on the pop operation one offset interval is split into two intervals. To combine these intervals, one can use the rejoin tool which consists of a full plane mirror and an one-way merging mirror.\label{fig:tool_rejoin_intervals_02}}
%\end{}

%\clearpage

%\newpage

% !TEX root = pinball_main.tex

%moved to Appendix and was made into a Section, rather than a subsubsection
\section{Using the Delay of a Pinball without Bumpers to Implement a Stack}
\label{delay-stack-ball_app}

In addition to bumpers, moving walls can also be used to change the speed of a ball. This change in the speed can result in a multiplication of the time offset by a constant factor. 
The construction of this multiplier is illustrated in Figure~\ref{fig:tool_time_ball_mult_1}, \ref{fig:tool_time_ball_mult_2}, and~\ref{fig:tool_time_ball_mult_3}. The main challenge of the analysis is the trajectory of the ball according to the movement of the walls. In the following we show how the bumpers of the construction in Subsection~\ref{subsect:Time_Delay_of_the_Pinball} can be simulated by moving walls. Thus, we have that:

\begin{lemma}
	The $2$D Pinball Wizard problem with one-way gates, walls and moving walls
	simulates a pushdown automaton where the stack is represented by a time offset
	such that each step is represented by a constant number of reflections.
	\label{le:Pinball_noBumper}
\end{lemma}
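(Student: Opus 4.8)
The plan is to reduce the statement to the already-established bumper construction of Lemma~\ref{th:Pinball_Bumper} by showing that each of the two bumpers in the time-offset multiplier can be replaced by a single moving wall producing the identical speed change. The only genuinely new ingredient is the collision law for moving walls, so I would begin there. Recalling the relative-velocity analysis of Figure~\ref{fig:pinball_tools_2}, when a ball of velocity $\vec{v}_b$ strikes a wall moving with velocity $\vec{v}_w$, one passes to the wall's rest frame (subtracting $\vec{v}_w$), reflects elastically there, and transforms back (adding $\vec{v}_w$). If the wall is oriented with its normal along the ball's direction of travel in way~2 of Figure~\ref{fig:tool_time_no_bumper_mult_1}, this reverses the ball's heading and changes its speed by exactly twice the wall's normal speed. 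Hence a wall receding from the ball at speed $w$ slows it from $v$ to $v-2w$, and a wall advancing at speed $w$ speeds it from $v$ to $v+2w$; this is precisely the behaviour of a negative, respectively positive, bumper.

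Next I would transcribe the three-part decomposition of Lemma~\ref{th:Pinball_Bumper} essentially verbatim, replacing the bumper-acceleration functions $a_1,a_2$ by the wall-velocity functions $v_1(t),v_2(t)$ of Figure~\ref{fig:tool_time_no_bumper_mult_1}. The target is unchanged: the ball must leave way~2 with speed
$$
v_{b,1}(t_s)\ =\ \frac{v\cdot\delta_2}{v\cdot t_s+\delta_2}\,,
$$
so that the time offset is doubled, $t_e(t_s)=2\,t_s+t_e$. By the collision law above, this fixes the required normal speed of the first wall at the collision instant to $w_1=\tfrac12\bigl(v-v_{b,1}(t_s)\bigr)\ge 0$ (a receding wall, since $v_{b,1}(t_s)\le v$), and the second wall must supply the inverse change so as to restore speed $v$.

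The main obstacle, and the one absent from the bumper proof, is that a moving wall does not present a fixed collision point: its position at time $t$ differs from its initial location by the displacement $d_1(t_s)$, respectively $d_2(t_s)$, indicated in Figure~\ref{fig:tool_time_no_bumper_mult_1}, so both the collision time and the travelled distances $\delta_1,\delta_2,\delta_3$ acquire an implicit dependence on $t_s$. I would therefore set up the implicit meeting equation---the ball advancing at its current speed must coincide with the moving wall---solve it for the collision time as a function of $t_s$, and then choose $v_1(t)$ and $v_2(t)$ so that the combined effect of the modified speed and the shifted collision points reproduces exactly the offset-doubling identity. Because the ball's speed stays bounded away from $0$ over the offset range $t_s\in[0\ldots 2)$, the resulting wall speeds are finite; one then checks that each wall can return to its initial location during its rest interval, which is exactly why the separator of Figure~\ref{fig:if_else_time_pb} must be fed a gap obtained by double multiplication rather than a single one.

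Finally I would assemble the gadgets. The halving multiplier follows by exchanging the two walls, as in the bumper case; the offset separator of Figure~\ref{fig:if_else_time_pb} together with the constant-delay and recombination gadgets already use only static and moving walls and one-way gates. Chaining push, pop, delay and separation thus yields a stack represented by the time offset and built from one-way gates, plane walls, and moving walls alone, with a constant number of reflections per operation---which is the claim of the lemma.
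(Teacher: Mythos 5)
Your proposal follows essentially the same route as the paper's proof: the same three-part decomposition carried over from the bumper lemma, the same relative-frame collision law giving a speed change of twice the wall's normal speed ($v_{b,1}(t_s)=v-2v_1(t_s)$), and the same recognition that the moving collision point forces an implicit meeting equation in $t_s$, which the paper indeed writes as integral equations for $d_1(t_s)$ and $d_2(t_s)$, converts to differential equations, and solves explicitly (obtaining $t_{c,2}(t_s)=2\,t_{c,1}(t_s)$ and $a_2(t)=\tfrac12 a_1(t/2)$ before fixing a concrete $t_{c,1}$). The only difference is that you leave the solution of the meeting equations as a plan while the paper carries the computation through, but the approach is the same.
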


\begin{proof}
	
As for our analyzes of the pinball problem with bumpers we assume that outside of the individual construction blocks, a ball always has a fixed speed of $v$. Hence the required time for the ball to reach the moving interval of the first moving wall from the start interval is $\frac{\delta_1}{v}$; and the 
required time of the ball to reach the end interval from moving interval of the second moving wall is $\frac{\delta_3}{v}$. Since both values are 
independent from the time offset $t_s$ we will ignore these parts in the following analysis.

%{\color{red}
%one-way gate for trapdoor ?	
%
%we use perfectly elastic collision
%
%Question: we need something for the perfectly elastic collision (literature?)
%\begin{enumerate}
%\item
%collisions at a stable fixed walls
%\item
%collisions at walls which moves with a fixed speed in one direction (wall speed does noch change)
%\item
%collisions at walls which accelerated speed (the momentum of the acceleration will not change)
%\end{enumerate}
%For no 1 this is known as perfect elastic collision. For no 2 this is a perfect elastic collision 
%relativ to the moving wall. My question: does the acceleration change something according to no 2 
%or is it just no 2 where the speed is taken at the moment of collision?
%}

We start with the multiplication of the time offset by $2$.
Let us assume the following values
where $t_s$ is the starting time of the ball relative to a base timeline (marked as start line of moving interval of the rightmost moving wall).
We can divide the movement of the ball into three parts. 

\paragraph*{Part 1}
\begin{figure}[htb]
	\begin{center}
		\scalebox{.25}{
			\input{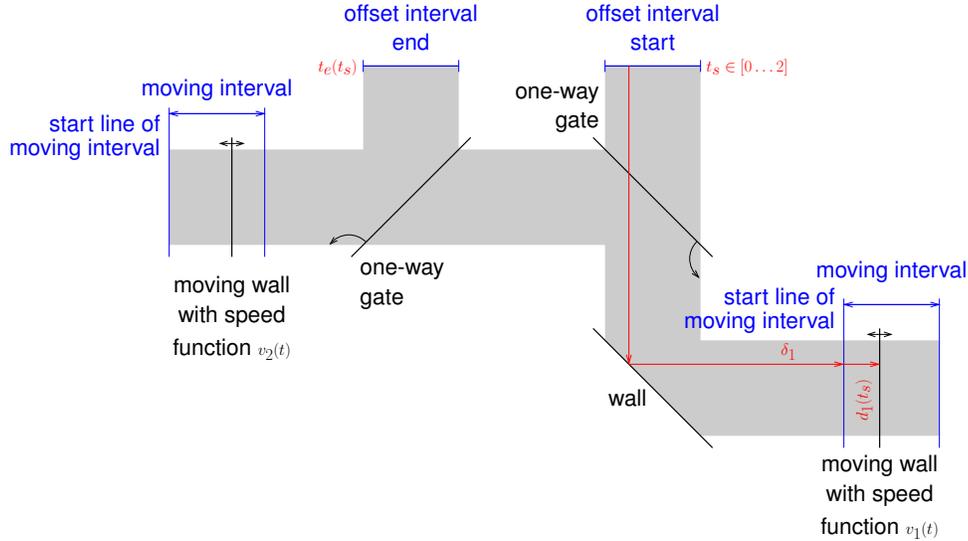}
		}
	\end{center}
	\caption{Multiplying the time offset by a factor using moving walls: Definition of $\delta_1$\label{fig:tool_time_ball_mult_1}}
\end{figure}
\FloatBarrier

For the first part the ball goes from the 
starting line of the first moving wall to the collision with the first moving wall.
\begin{itemize}
\item We assume that the length of the moving interval of the first moving wall is $1$.
\item $v$ is the initial speed of ball and its final speed.
\item The ball starts at the start line of the moving interval at time step $t_s$.
\item $a_1(t)$ is the acceleration of the first moving wall at time step $t$. We assume that
the acceleration starts at time $0$.
\item $t_{c,1}(t_s)$ is the time when the ball collides with the first moving wall.
\item $d_1(t_s)$ is the distance between the start line of the moving interval of the first moving wall and the position of the moving wall where the ball collides with the moving wall. Note that $d_1(t_s) = (t_{c,1}(t_s)-t_s)\cdot v$.
\item $v_1(t_s)=t_{c,1}(t_s)\cdot a_1(t_{c,1}(t_s))$ is the speed of the moving wall when the ball collides with the first moving wall.
\item $v_{b,1}(t_s)$ is the speed of the ball after the collision with the first moving wall.
Note that $v_{b,1}(t_s) = v - 2\cdot v_1(t_s)$.
\end{itemize}
Note that 
$$
t_{c,1}(0)  =  0,\qquad
d_1(0)  =  0, \qquad
v_1(0)  =  0,\qquad\text{and}\qquad
v_{b,1}(0)  =  v\ .
$$
The dependence between $a_1(t)$ and $t_{c,1}(t_s)$ follow from the following 
equation
\begin{eqnarray}
d_1(t_s) \ \ = \ \ (t_{c,1}(t_s)-t_s) \cdot v \ \ = \ \ \int_{x=0}^{t_{c,1}(t_s)} a_1(x)\cdot x\ d x\ 
\label{equ:pinball_without_bumpers_01}
\end{eqnarray}

\paragraph*{Part 2}
\begin{figure}[htb]
	\begin{center}
		\scalebox{.25}{
			\input{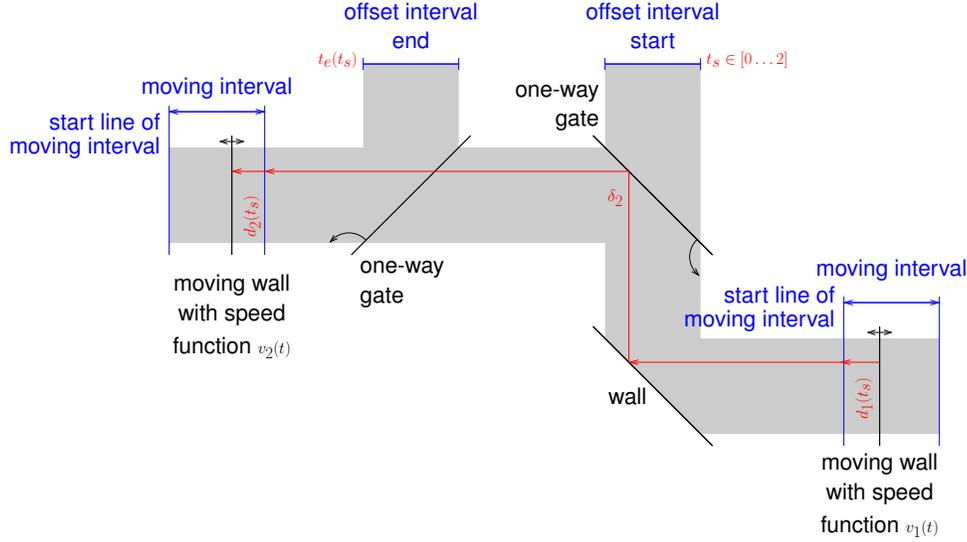}
		}
	\end{center}
	\caption{Multiplying the time offset by a factor using moving walls: Definition of $\delta_2$  \label{fig:tool_time_ball_mult_2}}
\end{figure}
\FloatBarrier

In the second part, the ball goes from the 
the first moving wall to the second moving wall.
\begin{itemize}
\item $\delta_2$ is the distance between the 
start line of the moving interval of the first moving wall and the 
end line of the moving interval of the second moving wall.
\item $1$ is the length of the moving interval of the second moving wall.
\item $v_{b,1}(t_s)$ is the speed of ball within this part.
\item The ball reaches the end line of the moving interval 
of the second moving wall at time step $t_h(t_s)$.
Note that $t_h(t_s) = t_{c,1}(t_s)+\frac{\delta_2+d_1(t_s)}{v_{b,1}(t_s)}$. 
And note that in its movement the second moving wall moves from the 
left to the right, therefore we use $t_h=t_h(0)+\frac{1}{v}$.
\item $a_2(t)$ is the acceleration of the second moving wall. We assume that
the acceleration starts at time $t_h$. For simplicity we denote by the parameter $t$ the time after the movement of the moving wall has started, i.e. the time difference to $t_h$.
\item $t_{c,2}(t_s)$ is the time difference between $t_h$ and the time where the ball collides with the second moving wall.
\item $v_2(t_s) = a_2(t_{c,2}(t_s))\cdot t_{c,2}(t_s)$ will be the speed of the second moving wall when the ball collides with this wall.
\item $d_2(t_s)$ is $1$ less than the distance between the start line of the moving interval of the second moving wall and the position of the moving wall where the ball collides with the moving wall. That is, $d_2(t_s) = 1-t_{c,2}(t_s)\cdot v_{b,1}(t_s)$.
\item $v_{b,2}(t_s)=v$ is the speed of the ball after the collision with the second moving wall.
For the speed after the collision with the second moving wall we assume that $v = v_{b,1}(t_s) + 2\cdot v_2(t_s)$.
\end{itemize}
Note that 
%\begin{eqnarray*}
$$
t_h  =  \frac{\delta_2+1}{v},\qquad
t_{c,2}(0)  =  0,\qquad
d_2(0)  =  1,\qquad
v_{b,1}(0)  =  v_{b,2}(t) \ = \ v,
$$
and
$$
v_1(t_s)  =  a_1(t_{c,1}(t_s))\cdot t_{c,1}(t_s)  =  a_2(t_{c,2}(t_s))\cdot t_{c,2}(t_s)  =  v_2(t_s)\ .
$$
The dependence between $a_2(t)$ and $t_{c,2}(t_s)$ follow from the following 
equation
\begin{eqnarray}
d_2(t_s) & = & 
\left(
t_{c,2}(t_s)+t_h    
% total time till collision with second moving wall
-\left(t_{c,1}(t_s)+\frac{\delta_2+d_1(t_s)}{v_{b,1}(t_s)}\right)
% total time till the ball reaches the end of the moving interval for the first time
\right) \cdot v_{b,1}(t_s)\nonumber\\
& = & \left(t_{c,2}(t_s)  +\frac{\delta_2+1}{v}
-t_{c,1}(t_s)
\right) \cdot v_{b,1}(t_s)
-\delta_2-d_1(t_s) \nonumber \\ 
& = & 1-\int_{x=0}^{t_{c,2}(t_s)} a_2(x)\cdot x\ d x\ .
\label{equ:pinball_without_bumpers_02}
\end{eqnarray}

\paragraph*{Part 3}

\begin{figure}[htb]
	\begin{center}
		\scalebox{.25}{
			\input{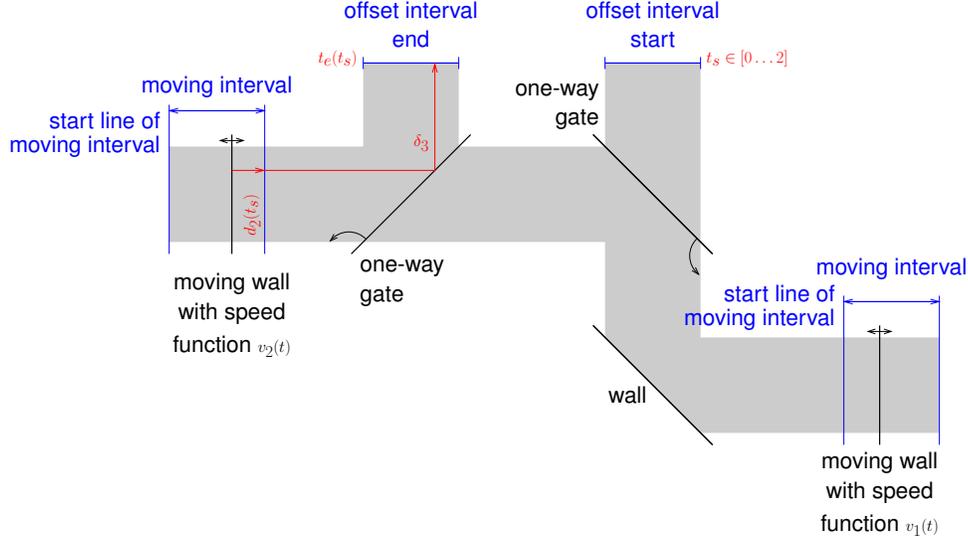}
		}
	\end{center}
	\caption{Multiplying the time offset by a factor using moving walls: Definition of $\delta_3$\label{fig:tool_time_ball_mult_3}}
\end{figure}
\FloatBarrier

For  the third part the ball goes from the 
the second moving wall to the end line.
\begin{itemize}
\item The ball reaches the end line of the moving interval of the second moving mirror for the last time at time step $t_3(t_s)$.
Note that $t_3(t_s) = t_h+t_{c,2}(t_s)+\frac{d_2(t_s)}{v}$. We use $t_e=t_3(0)$; this implies
$$
t_e  =  \frac{\delta_2+2}{v}\ .
$$
\end{itemize}
Our goal is now to determine functions for the acceleration $a_1(t_s)$ and $a_2(t_s)$ 
such that $t_3(t_s) = 2\cdot t_s + c$ for a value 
$$
c \ \ = \ \ t_e \ \ = \ \ \frac{\delta_2+2}{v}
$$ 
which does not depend in $t_s$.

Analysing from the time when the ball crosses the end line, the dependence between $a_2(t)$ and $t_{c,2}(t_s)$ follow from the following 
equation 
\begin{eqnarray}
d_2(t_s) & = & 
(t_{3}(t_s) - (t_{c,2}(t_s)+t_h)) \cdot v \ \ = \ \ 
% total time minus the time to collision
% 2\cdot t_s + \frac{\delta_2+2}{v}
% - t_{c,2}(t_s)
% - \frac{\delta_2+1}{v}
% =
% 2\cdot t_s + \frac{1}{v}
% - t_{c,2}(t_s)
(2\cdot t_s-t_{c,2}(t_s))\cdot v + 1 \nonumber\\
& = &
1-\int_{x=0}^{t_{c,2}(t_s)} a_2(x)\cdot x\ d x\ .
\label{equ:pinball_without_bumpers_03}
\end{eqnarray}
It might be useful to change the upper bounds of the Equations~\ref{equ:pinball_without_bumpers_01}, \ref{equ:pinball_without_bumpers_02}, and~\ref{equ:pinball_without_bumpers_03} to $t_s$. Since for a continuous differentiable function we have
$$
\int_{g(\ell)}^{g(u)} f(x) dx \ \ = \ \ \int_{\ell}^{u} f(g(x))\cdot g'(x) dx
$$
we get
\begin{eqnarray*}
(t_{c,1}(t_s)-t_s) \cdot v & = & \int_{x=0}^{t_s} a_1(t_{c,1}(t))\cdot t_{c,1}(t)\cdot t_{c,1}'(t)\ d t\\
\left(t_{c,2}(t_s)  +\frac{\delta_2+1}{v}
-t_{c,1}(t_s)
\right) \cdot v_{b,1}(t_s) \hspace*{1cm} & & \\
-\delta_2-d_1(t_s) & = & 1-\int_{x=0}^{t_s} a_2(t_{c,2}(t))\cdot t_{c,2}(t)\cdot t_{c,2}'(t)\ d t\\
(2\cdot t_s-t_{c,2}(t_s))\cdot v + 1 & = &
1-\int_{x=0}^{t_s} a_2(t_{c,2}(t))\cdot t_{c,2}(t)\cdot t_{c,2}'(t)\ d t\ .
\end{eqnarray*}
Transforming the equations into
differential equations lead to
\begin{eqnarray*}
v\cdot t_{c,1}'(t_s)- v & = & a_1(t_{c,1}(t_s))\cdot t_{c,1}(t_s)\cdot t_{c,1}'(t_s)\\
\left(t_{c,2}(t_s)  +\frac{\delta_2+1}{v}
-t_{c,1}(t_s)
\right) \cdot v_{b,1}'(t_s) \hspace*{1cm} & & \\
+ \left(t_{c,2}'(t_s) 
-t_{c,1}'(t_s)
\right) \cdot v_{b,1}(t_s)
-d_1'(t_s) & = & - a_2(t_{c,2}(t_s))\cdot t_{c,2}(t_s)\cdot t_{c,2}'(t_s)\\
2\cdot v-v\cdot t_{c,2}'(t_s) & = &
-a_2(t_{c,2}(t_s))\cdot t_{c,2}(t_s)\cdot t_{c,2}'(t_s)\ .
\end{eqnarray*}
or after the substitution with
$$
v_1(t_s) \ = \ a_1(t_{c,1}(t_s))\cdot t_{c,1}(t_s) \ = \ a_2(t_{c,2}(t_s))\cdot t_{c,2}(t_s) \ = \ v_2(t_s)
$$
and by the definition of $d_1(t_s)$ in the first part of this analysis, we get
% d_1(t_s) = \int_{x=0}^{t_{c,1}(t_s)} a_1(x)\cdot x\ d x
% d_1(t_s) = \int_{x=0}^{t_s} a_1(t_{c,1}(t))\cdot t_{c,1}(t)\cdot t_{c,1}'(t)\ d t
% d_1(t_s)' = a_1(t_{c,1}(t_s))\cdot t_{c,1}(t_s)\cdot t_{c,1}'(t_s)
% d_1(t_s)' = v_1(t_s)\cdot t_{c,1}'(t_s)
%
%
%\begin{eqnarray*}
%v\cdot t_{c,1}'(t_s)- v & = & v_1(t_s)\cdot t_{c,1}'(t_s)\\
%\left(t_{c,2}(t_s)  +\frac{\delta_2+1}{v}
%-t_{c,1}(t_s)
%\right) \cdot v_1'(t_s) + \hspace*{2cm} & & \\
%\left(t_{c,2}(t_s)' 
%-t_{c,1}'(t_s)
%\right) \cdot v_1(t_s)
%-v_1(t_s)\cdot t_{c,1}'(t_s) & = & - v_1(t_s)\cdot t_{c,2}'(t_s)\\
%2\cdot v-v\cdot t_{c,2}'(t_s) & = &
%-v_1(t_s)\cdot t_{c,2}'(t_s)\ .
%\end{eqnarray*}
\begin{eqnarray}
v\cdot (t_{c,1}'(t_s)- 1) & = & v_1(t_s)\cdot t_{c,1}'(t_s)
\label{equ:pinball_without_bumpers_03a}\\
\left(t_{c,2}(t_s)  +\frac{\delta_2+1}{v}
-t_{c,1}(t_s)
\right) v_{b,1}'(t_s) \hspace*{1cm} & & \nonumber \\
+\left(t_{c,2}'(t_s) 
-t_{c,1}'(t_s)
\right)  v_{b,1}(t_s)
& = & v_1(t_s) (t_{c,1}'(t_s)-t_{c,2}'(t_s))
\label{equ:pinball_without_bumpers_03b}\\
v\cdot (t_{c,2}'(t_s)-2) & = &
v_1(t_s)\cdot t_{c,2}'(t_s)\ .
\label{equ:pinball_without_bumpers_03c}
\end{eqnarray}
From Equation~\ref{equ:pinball_without_bumpers_03a} and~\ref{equ:pinball_without_bumpers_03c} we can conclude that
$$
\frac{t_{c,2}'(t_s)-2}{t_{c,2}'(t_s)}  = 
\frac{t_{c,1}'(t_s)- 1}{t_{c,1}'(t_s)}
$$
and thus
%$$
%t_{c,1}'(t_s)\cdot \frac{t_{c,2}'(t_s)-2}{t_{c,2}'(t_s)}  = 
%t_{c,1}'(t_s)- 1
%$$
%$$
%1  = 
%t_{c,1}'(t_s)\cdot \left(1- \frac{t_{c,2}'(t_s)-2}{t_{c,2}'(t_s)}\right)
%$$
%$$
%1  = 
%t_{c,1}'(t_s)\cdot \left(\frac{t_{c,2}'(t_s)-t_{c,2}'(t_s)+2}{t_{c,2}'(t_s)}\right)
%$$
%$$
%1  = 
%t_{c,1}'(t_s)\cdot \left(\frac{2}{t_{c,2}'(t_s)}\right)
%$$
$$
t_{c,1}'(t_s) = 
\frac{t_{c,2}'(t_s)}{2}
$$
After integration we have
$$
t_{c,1}(t_s) = 
\frac{t_{c,2}(t_s)}{2}+ c_0
$$
with $c_0 = \frac{t_{c,2}(0)}{2}-t_{c,1}(0) = 0$, i.e.
\begin{eqnarray}
t_{c,2}(t_s)
& = & 2\cdot t_{c,1}(t_s)
\label{equ:pinball_without_bumpers_03d}
\end{eqnarray}
and since $a_1(t_{c,1}(t_s))\cdot t_{c,1}(t_s) = a_2(t_{c,2}(t_s))\cdot t_{c,2}(t_s)$
we have
\begin{eqnarray}
a_2(t) & = & \frac12\cdot a_1(t/2)\ .
\label{equ:pinball_without_bumpers_03e}
\end{eqnarray}

Combining the integral Equations~\ref{equ:pinball_without_bumpers_01}, \ref{equ:pinball_without_bumpers_02}, and~\ref{equ:pinball_without_bumpers_03} we get:
\begin{eqnarray*}
 \left(t_{c,2}(t_s)  +\frac{\delta_2+1}{v}
-t_{c,1}(t_s)
\right) \cdot v_{b,1}(t_s) \hspace*{1cm} & & \\
-\delta_2-(t_{c,1}(t_s)-t_s) \cdot v & = & 
(2\cdot t_s-t_{c,2}(t_s))\cdot v + 1 \\
 \left(t_{c,2}(t_s)  +\frac{\delta_2+1}{v}
-t_{c,1}(t_s)
\right) \cdot v_{b,1}(t_s) \hspace*{1cm} & & \\
-t_{c,1}(t_s) \cdot v & = & 
(t_s-t_{c,2}(t_s))\cdot v + 1 + \delta_2\\
 \left(t_{c,2}(t_s)  +\frac{\delta_2+1}{v}
-t_{c,1}(t_s)
\right) \cdot (v-2\cdot v_1(t_s))
& = & 
-(t_{c,2}(t_s)-t_{c,1}(t_s))\cdot v\\
& & \hspace*{1cm} + t_s\cdot v + 1 + \delta_2\\
 \left(t_{c,2}(t_s)  +\frac{\delta_2+1}{v}
-t_{c,1}(t_s)
\right) \cdot (v-2\cdot v_1(t_s))
& = & 
%-\left(t_{c,2}(t_s)+\frac{\delta_2+1}{v}-t_{c,1}(t_s)\right)\cdot v\\
\left(t_{c,1}(t_s)-t_{c,2}(t_s)-\frac{\delta_2+1}{v}\right)\cdot v\\
& & \hspace*{1cm} + t_s\cdot v + 2\cdot(1 + \delta_2)\\
 2\cdot \left(t_{c,2}(t_s)  +\frac{\delta_2+1}{v}
-t_{c,1}(t_s)
\right) \cdot (v-v_1(t_s))
& = & t_s\cdot v + 2\cdot(1 + \delta_2)\\
 \left(t_{c,2}(t_s)  +\frac{\delta_2+1}{v}
-t_{c,1}(t_s)
\right) \cdot (v-v_1(t_s))
& = & \frac{t_s\cdot v}{2} + \delta_2 + 1\ .
\end{eqnarray*}
From Equation~\ref{equ:pinball_without_bumpers_03d} we can conclude
\begin{eqnarray}
 \left(t_{c,1}(t_s)  +\frac{\delta_2+1}{v}
\right) \cdot (v-v_1(t_s))
& = & \frac{t_s\cdot v}{2} + \delta_2 + 1\ .
\label{equ:pinball_without_bumpers_10}
\end{eqnarray}
and therefore
\begin{eqnarray}
 %\left(t_{c,1}(t_s)  +\frac{\delta_2+1}{v}
%\right) \cdot v
%-
%\left(t_{c,1}(t_s)  +\frac{\delta_2+1}{v}
%\right) \cdot v_1(t_s)
%& = & \frac{t_s\cdot v}{2} + \delta_2 + 1\\
%t_{c,1}(t_s)  
 %\cdot v - \frac{t_s\cdot v}{2}
%& = &  \left(t_{c,1}(t_s)  +\frac{\delta_2+1}{v}
%\right) \cdot v_1(t_s)\\
%\frac{v^2}{2}\cdot\left( 2\cdot t_{c,1}(t_s)  
 %- t_s\right)
%& = &  \left(v\cdot t_{c,1}(t_s)  +\delta_2+1\right) \cdot v_1(t_s)\\
v_1(t_s) & = & \frac{v^2}{2}\cdot\frac{2\cdot t_{c,1}(t_s)  
 - t_s}{v\cdot t_{c,1}(t_s)  +\delta_2+1}
\label{equ:pinball_without_bumpers_10}\\
a_1(t_s) & = & \frac{v^2}{2}\cdot\frac{2\cdot t_{c,1}(t_s)  
 - t_s}{v\cdot t_{c,1}^2(t_s)  +\delta_2\cdot t_{c,1}(t_s)+t_{c,1}(t_s)}
\label{equ:pinball_without_bumpers_11}
\end{eqnarray}
We assume that $v_1(t_s)>0$, $d_1(t_s), d_2(t_s)\ge 0$, $d_1(t_s), d_2(t_s)\le 1$, $t_s\in [0..2]$, and $t_{c,1}(t_c), t_{c,2}(t_c)$ are strictly monotonically increasing. 
$$
\begin{array}[c]{rclcrclcrcl}
t_{c,1}(t_s) & \ge & t_s & \hspace*{5mm} &
(t_{c,1}(t_s)-t_s)\cdot v & \le & 1 & \hspace*{5mm} &
%\ref{equ:pinball_without_bumpers_01}
2\cdot t_{c,1}(t_s) & > & t_s\\
%\ref{equ:pinball_without_bumpers_10}
t_{c,1}'(t_s) & > & 1 & & 
%\ref{equ:pinball_without_bumpers_03a}
t_{c,2}'(t_s) & > & 2 & &
%\ref{equ:pinball_without_bumpers_03c}
t_{c,2}(t_s) & \ge & 2\cdot t_s\\
%\ref{equ:pinball_without_bumpers_03}
t_{c,2}(t_s) & \le & 2\cdot t_s + \frac1v & &
%\ref{equ:pinball_without_bumpers_03}
2\cdot v_{1}(t_s) & < & v\ .
%definition of $v_{b,1}(t_s)$
\end{array}
$$
Let us choose 
$$t_{c,1}(t_s) = t_s\cdot\left(1+\frac{1}{4\cdot v}\right)$$
which implies that
\begin{eqnarray*}
t_{c,2}(t_s) & = & 2\cdot t_s\cdot\left(1+\frac{1}{4\cdot v}\right)\\
\end{eqnarray*}
for simplicity, let
$ 
\Delta_v = 1+\frac{1}{4\cdot v}.
$
This leads to 
\begin{eqnarray*}
v_1(t_s) & = & \frac{v^2}{2}\cdot\frac{2\cdot t_s\cdot\Delta_v  
 - t_s}{v\cdot t_s\cdot\Delta_v  +\delta_2+1}\\
%
%a_1(t_s) & = & \frac{v^2}{2}\cdot\frac{2\cdot t_s\cdot\left(1+\frac{1}{4\cdot v}\right) 
% - t_s}{v\cdot t_s^2\cdot\left(1+\frac{1}{4\cdot v}\right)^2  +\delta_2\cdot t_s\cdot\left(1+\frac{1}{4\cdot v}\right)+t_s\cdot\left(1+\frac{1}{4\cdot v}\right)}\\
%a_1(t_s) & = & \frac{v^2}{2}\cdot\frac{2\cdot\left(1+\frac{1}{4\cdot v}\right) 
% - 1}{v\cdot t_s\cdot\left(1+\frac{1}{4\cdot v}\right)^2  +\delta_2\cdot\left(1+\frac{1}{4\cdot v}\right)+\left(1+\frac{1}{4\cdot v}\right)}\\
a_1(t_s) & = & \frac{v^2}{2}\cdot\frac{2\cdot\Delta_v
	 - 1}{v\cdot t_s\cdot\Delta_v^2  +\delta_2\cdot\Delta_v+\Delta_v}\\
%
%a_2(t_s) & = & \frac12\cdot 
%\frac{v^2}{2}\cdot\frac{2\cdot t_s/2\cdot\left(1+\frac{1}{4\cdot v}\right) 
% - t_s/2}{v\cdot t_s^2/4\cdot\left(1+\frac{1}{4\cdot v}\right)^2  +\delta_2\cdot t_s/2\cdot\left(1+\frac{1}{4\cdot v}\right)+t_s/2\cdot\left(1+\frac{1}{4\cdot v}\right)}\\
%a_2(t_s) & = & \frac12\cdot 
%\frac{v^2}{2}\cdot\frac{2\cdot \left(1+\frac{1}{4\cdot v}\right) 
% - 1}{v\cdot t_s/2\cdot\left(1+\frac{1}{4\cdot v}\right)^2  +\delta_2\cdot\left(1+\frac{1}{4\cdot v}\right)+\left(1+\frac{1}{4\cdot v}\right)}
a_2(t_s) & = & \frac12\cdot \frac{v^2}{2}\cdot\frac{2\cdot \Delta_v
	 - 1}{v\cdot t_s/2\cdot\Delta_v^2  +\delta_2\cdot\Delta_v+\Delta_v}
\end{eqnarray*}

%
%Let $v_h(t_s)=v-2\cdot v_1(t_s)$ denote the speed of the ball between the two moving walls, the we can rewrite the last equations by
%$$
 %\left(t_{c,2}(t_s)  +\frac{\delta_2+1}{v}
%-t_{c,1}(t_s)
%\right) \cdot v_h(t_s)
%\ \ = \ \ t_s\cdot v + 2\cdot\delta_2 + 2 
%- v\cdot t_{c,2}(t_s)  -\delta_2-1
%+ v\cdot t_{c,1}(t_s)
%$$
%which gives us 
%\begin{eqnarray}
%v_h(t_s) & = & \frac{t_s\cdot v + \delta_2 + 1
%- v\cdot t_{c,2}(t_s)  
%+ v\cdot t_{c,1}(t_s)}{t_{c,2}(t_s)  +\frac{\delta_2+1}{v}
%-t_{c,1}(t_s)}
%\label{equ:pinball_without_bumpers_12}
%\end{eqnarray}
%Within the next step we investigate the time of the ball from the start interval (resp. form reaching the start line of the moving interval of the first moving mirror
%for the first time) to the end interval (resp. to reaching the end line of the moving interval of the second moving mirror
%for the last time). For this we get
%\begin{eqnarray}
%\frac{\delta_2+2}{v}+2\cdot t_s & = &
%t_{c,1}+\frac{t_{c,1}\cdot v+\delta_2}{v_h} + .....
%\label{equ:pinball_without_bumpers_11}
%\end{eqnarray}
%

%$ $ \hfill \proofend
\end{proof}

Note that at the output interval the order of the space offset is reversed. To correct the order we have to add the construction of 
Figure~\ref{fig:revOffset_appendix} at the output interval of the time offset multiplier.

One should highlight here that from Figure~\ref{fig:tool_parabola_mirror_02_rpt}, one can see that for every offset of the ball that might occur within each of the push $0$, push $1$ and pop constructions, the ball will always need the same amount time to go from the horizontal start line to the finish line. This is possible because an extra delay can be introduced on the ball's movement by an additional substructure such that its time of arrival at the end line is well-defined. Hence, the operations of the space offset stack is independent of the time offset stack.

After we have implemented a gadget for multiplying the time offset by a constant we have to show how we can split the time interval $[0\ldots 2)$ into the subinterval $[0\ldots 1)$ (representing the top element 0 on our time offset stack) and the subinterval $[1\ldots 2)$ (representing the top element $1$ on our time offset stack) after multiplying the time offset with $2$. For this we use the construction in Figure~\ref{fig:if_else_time_pb}. 
 If we always do the multiplication with $2^{-1}$ two times and do the same for the multiplication with $2$, we can guarantee that the ball never arrives in the intervals $[\frac12\ldots 1)$ and $[\frac32\ldots 2)$ at our separator after the double multiplication with $2$.
% If we always use two multiplications with $2^{-1}$ and two multiplications with $2$, we can guarantee that the ball never arrive in the intervals $[1/2\ldots 1)$ and $[3/2\ldots 2)$ at our separator after the double multiplication with 2. 
 This gives the moving wall some time to move out of the way of the ball or to come back, respectively. Let us now analyze the movement of the moving wall in more detail.

For our analysis we will assume that a ball with speed $v$ (this is the speed of the ball outside of our gadgets, i.e. outside of the second part of the way of the ball through our multipliers) needs $\kappa$ time units of our time offset to pass one space unit of our space offset, i.e. the ball needs $\kappa\cdot v$ to pass the space interval $[0\ldots 1)$ of our space offset stack.
Note that any ball starting at the leftmost position of the start interval reaches the moving wall $2\cdot\kappa$ time units later than the 
corresponding ball starting at the same time at the rightmost position of the start interval (recall that this interval has the length of $2\cdot \kappa$). Hence, the latest ball from the time interval $[0..\frac12)$ reaches the moving wall $\frac12+2\cdot\kappa$ time units later than the earliest ball from this time interval. 
On the other hand the earliest ball from the second
time offset interval $[1..\frac32)$ can reach the original position of the moving wall $1$ time unit later than the earliest ball of the 
time offset interval $[0..\frac12)$. 
Hence the time for moving the wall out of the way of the ball is reduced to $\frac12-2\cdot\kappa$. 
%Within this time frame the moving wall has to be moved away. 
Let us assume that we can move 
such a wall with a fraction of $\varepsilon$ of the speed $v$ of the ball and assume that we move the wall over a distance of $3\cdot\kappa$.

Let us assume that the earliest time a ball can reach the moving wall is $t$, then
the movement function of the wall has the following form:
\begin{itemize}
	\item in the time interval $[0\ldots \frac12+2\cdot\kappa)$ the moving wall will not move,
	\item in the time interval $[\frac12+2\cdot\kappa\ldots 1)$ the moving wall will move down with a speed of $\varepsilon\cdot v$,
	\item in the time interval $[1\ldots \frac32+2\cdot\kappa)$ the moving wall will not move, and
	\item in the time interval $[\frac32+2\cdot\kappa\ldots 2)$ the moving wall will move up with a speed of $\varepsilon\cdot v$.
\end{itemize}
From this movement we can derive the following inequality
$$
\frac{3\cdot\kappa}{\varepsilon\cdot v} \ \ \le \ \ \frac{\left(\frac12 -2\cdot \kappa\right)}{v}\ .
$$
Hence we can use 
$$
\kappa \ \ < \ \ \frac14\qquad\text{and}\qquad 
\varepsilon \ \ \ge \ \ \frac{6\cdot\kappa}{1-4\cdot\kappa}
$$
for our construction. This leads to a useful speed of the moving wall --- even if we assume $v$ to be at the speed of light --- if we assume that $\kappa$ is sufficiently small.

Now, as stated in Section~\ref{space_offset_stack} already, from Lemmas~\ref{le:pinball_space_offset_stack},~\ref{th:Pinball_Bumper} and~\ref{le:Pinball_noBumper} we can conclude the correctness of Theorem~\ref{th:Pinball_noBumper}.
\section{Ray Particle Tracing Problem is Turing Complete}
\label{sec:RayParticleTracing}

Let us now assume that bumpers and moving walls do not change the speed of the ball, i.e. our ball has always the same speed $v$. This is for example the case if we assume that the ball has the speed of light, leading to the Ray Particle Tracing problem. Note that bumpers behave like usual walls if the speed of a ball cannot be changed. The only effect of a moving wall on the ball might be a change in the distance which the ball has to go.

Within this problem we usually use mirrors instead of walls however, for uniformity in the Appendix, we will stay with the word 'wall'.

Recall that our constructions in Appendix~\ref{space_stack_app} does not use any bumper or moving wall at all, hence we can derive directly all our observations for the implementation of a space offset stack within the settings of the Ray Particle Tracing problem. Recall that
to simulate a two-stack PDA, the horizontal space offset on the start position of the ball can be used to implement a first stack, while for a second stack, a time delay introduced on the ball could be used. According to our observation about Appendix~\ref{space_stack_app} we have already an implementation for the space offset stack. In the following we will present an implementation for the time offset stack.

Recall that if a ball is always traveling with maximum speed and collides with a moving mirror, its speed remains unchanged. Hence, we have to use the distance of the moving wall to a given starting line to manipulate the time offset of a ball. Furthermore, if a ball never changes its speed then potential speed changes do not take place and hence we do not have to undo such changes. Thus our construction for the multiplication of the time offset is simpler than the construction in Appendix~\ref{delay-stack-ball_app} (see Figure~\ref{fig:tool_time_mult_1_comp}).

\begin{figure}[htb]
	\centerline{
		\scalebox{.18}{
			\input{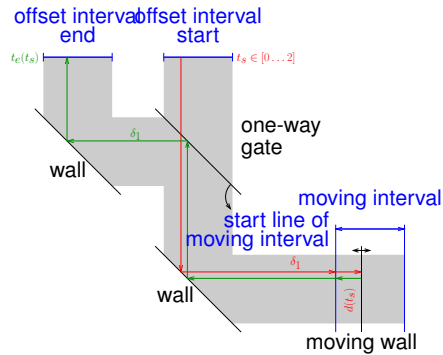}
		}
	}
	\caption{Construction for multiplying the time offset by a constant. 
		\label{fig:tool_time_mult_1_comp}}
\end{figure}
\FloatBarrier

For simplicity we will assume in the following that our ball travels over one space unit within one time unit. Note that such a space unit is not necessarily related to the space units of the space offset analyzed in the previous section.

\begin{figure}[htb]
	\begin{center}
		\scalebox{.25}{
			\input{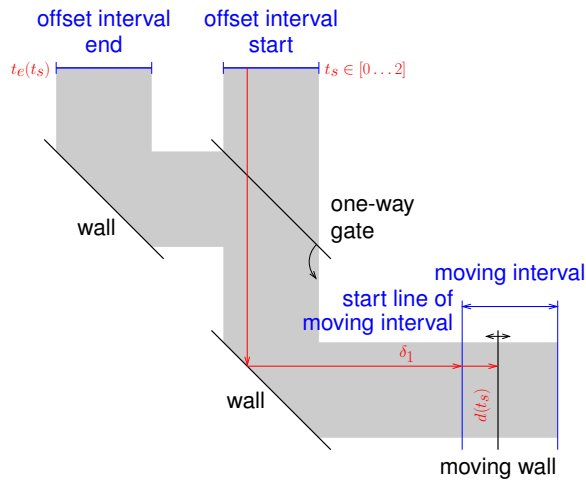}
		}
	\end{center}
	\caption{Construction for multiplying the time offset by a constant. The moving wall
		moves back and forth within the moving interval. Its speed is given by the
		function $d(t_s)$.
		The figure illustrates the
		motion of the ball from the horizontal starting line to the moving wall.
		If $t_s$ is the time offset of a ball (at the horizontal line "offset interval start") it will reach the starting line of the moving 
		wall at time $t_s+\delta_1$ and the moving wall at time $t_s+\delta_1+d(t_s)$.
		Since $\delta_1$ is a constant, we can assume that the position of the moving wall
		only depends on the time offset $t_s$.\label{fig:tool_time_mult_1}}
\end{figure}
\FloatBarrier

As presented in Figure~\ref{fig:tool_time_mult_1} and~\ref{fig:tool_time_mult_2} a ball with time offset 
$t_s$ will reach the horizontal finishing interval at time
$$
t_e\ =\ t_s+\delta_1+\delta_2+2\cdot d(t_s)
$$
independent from the space offset.
Let assume that $t_s\in [0\ldots 2)$. If the movement of the moving wall is chosen such that 
$$d(t_s) \ = \ \frac{t_s}{2}$$
we get
$$
t_e\ =\ 2\cdot t_s + \underbrace{\delta_1+\delta_2}_{\text{constant}}\ .
$$
And if we choose 
$$d(t_s) \ = \ 1-\frac{t_s}{4}$$
we get
$$
t_e\ =\ \frac{t_s}{2} + \underbrace{2+\delta_1+\delta_2}_{\text{constant}}\ .
$$
Note that the additive constants can be interpreted as a well-known constant movement of the base line of the 
time offset. Hence our construction gives the required multiplication of an time offset by $2$ and by $\frac{1}{2}$.

\begin{figure}[htb]
	\begin{center}
		\scalebox{.25}{
			\input{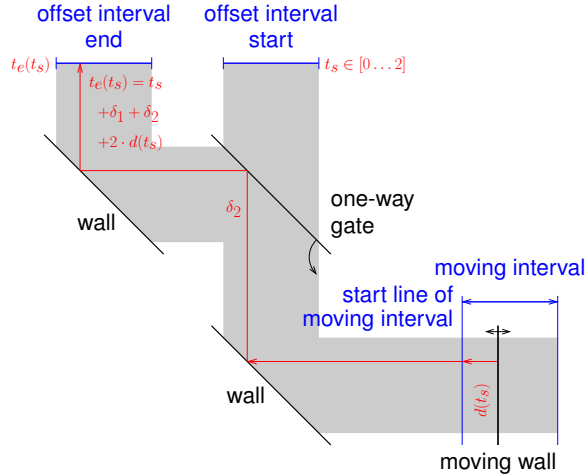}
		}
	\end{center}
	\caption{Construction for multiplying the time offset by a constant. The moving wall 
		is moving back and forth within the moving interval. Its speed is given by the
		function $d(t_s)$. The figure illustrates the
		motion of the ball from the moving wall to the horizontal finishing line. After 
		the reflection on the moving wall, the ball will again reach the
		starting line of the moving wall at time $t_s+\delta_1+2\cdot d(t_s)$ and it will
		reach the final horizontal line "offset interval end" at time $t_e=t_s+\delta_1+\delta_2+2\cdot d(t_s)$, where again $\delta_2$ is a constant only depending on the 
		size of the construction.\label{fig:tool_time_mult_2}
	}
\end{figure}
\FloatBarrier

Note that our construction for multiplying the time offset by $2$ or $\frac12$ reverses the ordering of the spatial offset again. But this can be rearranged as in the previous cases.

Recall that we assume that the time offset will be in the range $[0\ldots 2)$, but we left the unit of measurement open. Furthermore the speed of the moving walls depends on this unit. According to our assumption that the ball travels over one space unit within one time unit, $d(t_s)$ indicates that for the 
duplication of the time offset the wall has to move with a fraction of $\frac{1}{2}$ of the speed of the ball. If we assume that the ball is a ray particle traveling with the speed of light, this will be a huge value.

So let us assume that the maximum speed of a moving wall is at most a fraction of $\varepsilon\le \frac{1}{2}$ of the speed of the ball. If we iterate over the usage of the construction of Figure~\ref{fig:tool_time_mult_1} and~\ref{fig:tool_time_mult_2} then after 2 iterations, we get that 
$$
t_e\ =\ (t_s+2\cdot 2\cdot \varepsilon\cdot t_s) + \underbrace{2\cdot(\delta_1+\delta_2)}_{\text{constant}}
$$
and after $\xi$ iterations 
$$
t_e\ =\ (t_s+2\cdot \xi\cdot \varepsilon\cdot t_s) + \xi\cdot(\delta_1+\delta_2)\ .
$$
Hence if we choose 
$$
\xi \ \ = \ \ \left\lceil\frac{1}{2\cdot \varepsilon}\right\rceil
$$
and if we assume that the last moving wall within this sequence moves a little bit slower then we reach at the end
$$
t_e\ =\ 2\cdot t_s + \xi\cdot(\delta_1+\delta_2)\ .
$$
Note that if $\varepsilon$ is a constant also $\xi$ and therefore $\xi\cdot(\delta_1+\delta_2)$ are constants.

In a similar way we have to analyze the construction for the multiplication of the time offset with $\frac{1}{2}$. 
If we assume that $d(t_s)=1-\frac{t_s}{4}$, the moving wall will move with a fraction of $\frac{1}{4}$ of the speed of the ball. Again let us assume that
the maximum speed of a moving wall is at most a fraction of $\varepsilon\le \frac{1}{2}$ of the speed of the ball. If we iterate the usage of our construction %of Figure~\ref{fig:tool_time_mult_1} and~\ref{fig:tool_time_mult_2} 
then we get after 2 iteration that 
$$
t_e\ =\ (t_s-2\cdot 2\cdot \varepsilon\cdot t_s) + \underbrace{2\cdot(2+\delta_1+\delta_2)}_{\text{constant}}
$$
and after $\xi$ iterations 
$$
t_e\ =\ (t_s-2\cdot \xi\cdot \varepsilon\cdot t_s) + \xi\cdot(2+\delta_1+\delta_2)\ .
$$
Hence if we choose 
$$
\xi \ \ = \ \ \left\lceil\frac{1}{4\cdot \varepsilon}\right\rceil
$$
and if we assume that the last moving wall within this sequence moves a little bit slower then we reach at the end
$$
t_e\ =\ \frac{t_s}{2} + \xi\cdot(2+\delta_1+\delta_2)\ .
$$
Note that if $\varepsilon$ is a constant also $\xi$ and therefore $\xi\cdot(2+\delta_1+\delta_2)$ are constants.

For separating the two sub-intervals (one which represents the top entry 0 and the other the top entry 1 on the stack) we can again use our construction in Figure~\ref{fig:if_else_time_pb}. As we have discussed in Appendix~\ref{delay-stack-ball_app} we have to again double 
the number of multiplications by 2 or $2^{-1}$, respectively, to achieve a time gap for moving a separating moving wall out of the way of the ball.

Recall that our construction of a stack implementation by horizontal space offset does not influence the
time offset of a ray particle (if we ignore known constant delays). Analogously we can see that our 
construction of the stack implementation by the time offset does not influence the space offset.
Thus we have two independent stacks and therefore we can simulate a 1-tape DTM.

% !TEX root = pinball_main.tex
\section{Simulating a DTM}
\label{sec:SimDTM}

Within this part of the Appendix we will sketch how the Pinball Wizard problem and the Ray Particle Tracing problem can be used to simulate the computation of a 1-tape DTM (equivalently, a two-stack PDA). Hence both problems are Turing complete. Hence both problems are at least as hard as the Halting problem.

Assume a configuration of a 1-tape DTM (equivalently, a two-stack PDA). To represent the tape content left of the head we can use the time offset stack and to represent the tape content right of the head we can use the space offset stack. Given a 1-tape DTM with a binary input string $w$, let $S^1=\langle q_0, s^{1}_{0},w \rangle$ and $S^2=\langle s^{2}_{0}\rangle$ be the initial configuration of the space and time offset stacks, respectively where $q_0$ is the initial state of the DTM.
We assume that the states of the DTM are represented by binary numbers of length $\ell$. $s^{1}_{0}$ and $s^{2}_{0}$ are the bottom symbols encoded in the stacks.

At the start of every step in the simulation, if the current state is $q_i$ and $a_i$ is the symbol in the cell read by the DTM. Then we assume that $a_i$, all the elements to the right of $a_i$ and $q_i$ are in the space offset stack and the elements to the left of $a_i$ are in the time offset stack.
	
In each step, the current state $q_i$ and cell symbol $a_i$ are read from the space offset stack via a sequence of pop operations, each of which will result in a component which uniquely describe the starting point of the next step. Recall that each pop operation yields a binary decision tree over the possible output intervals regions, see Figure~\ref{fig:if_else_time_space}. In order to perform the transition function, we use a binary decision tree, where we branch out using the pop operation for every of the possibilities defined by the starting state and symbol read from the space offset stack.
Since this is a DTM, the next state $q_j$, the new cell symbol $a_j$, and the direction $\tau$ in which the head will move are uniquely defined. Next, we write $a_j$ to the current cell and move the tape head again by pop operations and push operations. Finally, we write the encoding of the new state $q_j$ to the space offset stack, combine the space offset intervals and time offset intervals (recall that the separators after the multiplications with 2 result in several alternative offset intervals) to a single unique output interval. Hence, the branches of the decision tree are combined again into a single space and time offset using one-way gates. 
The tool for combining intervals consisting of a wall and a one-way gate is shown in Figure~\ref{fig:tool_rejoin_intervals_01}. If necessary, additional delay on the particle can be introduced by strategically placed walls. Now we can start with the next simulation step.
	
For the moving walls we schedule the return time to the initial position according to the constant simulation time of each step of the Turing machine. So, we can reuse them for simulating in the next step.

\end{document}